\theoremstyle{plain}
\newtheorem{proposition}[theorem]{Proposition}
\title{The Robot Routing Problem for Collecting Aggregate Stochastic Rewards
}
\titlerunning{The Robot Routing Problem for Collecting Aggregate Stochastic Rewards} 
\author[1]{Rayna Dimitrova, Ivan Gavran, Rupak Majumdar,\\ Vinayak S.~Prabhu, and Sadegh Esmaeil Zadeh Soudjani}
\affil[1]{
Max Planck Institute for Software Systems, Kaiserslautern, Germany\\
  \texttt{\{Rayna,Gavran,Rupak,Vinayak,Sadegh\}@mpi-sws.org}}
\authorrunning{R. Dimitrova, I. Gavran, R. Majumdar, V.\,S. Prabhu, and S.\,E.\,Z. Soudjani} 
\subjclass{I.2.8 Problem Solving, Control Methods, and Search}
\keywords{Path Planning, Graph Games, Quantitative Objectives, Discounting}
\newcommand{\mc}{\mathcal}
\newcommand{\nat}{{\mathbb{N}}}
\newcommand{\set}[1]{\{ #1 \}}                  
\newcommand{\ie}{\emph{i.e.}\xspace}
\newcommand{\eg}{\emph{e.g.}\xspace}
\DeclareMathOperator{\ssum}{\mathsf{sum}}
\newcommand{\rsum}{r_{\ssum}}
\newcommand{\csum}{c_{\ssum}}
\DeclareMathOperator{\sav}{\mathsf{av}}
\newcommand{\rav}{r_{\sav}}
\newcommand{\cav}{c_{\sav}}
\newcommand{\Rsum}{R_{\ssum}}
\newcommand{\Rav}{R_{\sav}}
\newcommand{\Cav}{C_{\sav}}
\newcommand{\Csum}{C_{\ssum}}
\newcommand{\ocsum}{\bar{c}_{\mathsf{sum}}}
\newcommand{\ucsum}{\underline{c}_{\,\mathsf{sum}}}
\DeclareMathOperator{\mysum}{\mathsf{Sum}}
\DeclareMathOperator{\acc}{\mathsf{acc}}
\DeclareMathOperator{\Eacc}{\mathsf{Eacc}}
\DeclareMathOperator{\outcome}{\mathsf{outcome}}
\DeclarePairedDelimiter\abs{\lvert}{\rvert}%
\DeclareMathOperator{\Last}{\mathsf{Last}}
\newcommand{\Inf}{{\mathsf{Inf}}}
\begin{document}

\maketitle

\begin{abstract}
We propose a new model for formalizing reward collection problems on graphs
with dynamically generated rewards which may appear and disappear
based on a stochastic model.
The \emph{robot routing problem} is modeled as a graph whose nodes are stochastic
processes generating  potential rewards over discrete time.
The rewards are generated according to the stochastic process,
but at each step, an existing reward disappears with a given probability.
The edges in the graph encode the (unit-distance) paths between the rewards' locations.
On visiting a node, the robot collects the accumulated reward at the node at that time,
but traveling between the nodes takes time.
The optimization question asks to compute an optimal (or $\epsilon$-optimal) path  that maximizes the expected collected rewards.
\looseness=-1

We consider the finite and infinite-horizon robot routing problems.
For finite-horizon, the goal is to maximize the total expected reward, while for infinite
horizon we consider limit-average objectives. We study the computational
and strategy complexity of these problems, establish NP-lower bounds and
show that optimal strategies require memory in general. We also provide an
algorithm for computing $\epsilon$-optimal infinite paths for arbitrary
$\epsilon > 0$.

 \end{abstract}
 
 \section{Introduction}\label{sec:intro}
 
 Reward collecting problems on metric spaces are at the core of many applications,
 and studied classically in combinatorial optimization under many well-known monikers:
 the traveling salesman problem, the knapsack problem, the vehicle routing problem, 
 the orienteering problem, and so on.
 Typically, these problems model the metric space as a discrete graph whose nodes
 or edges constitute rewards, either deterministic or stochastic, and ask
 how to traverse the graph to maximize the collected rewards.
 In most versions of the problem, rewards are either fixed or cumulative.
 In particular, once a reward appears, it stays there until collection. 
 However, in many applications, existing rewards may disappear (e.g., a customer
 changing her mind) or have more ``value'' if they are collected fast.
 
 We introduce the \emph{Robot Routing problem}, which combines the spatial aspects of traveling
 salesman and other reward collecting problems on graphs with stochastic reward generation
 and with the possibility that uncollected rewards may disappear at each stage.
 The robot routing problem consists of a finite graph and a reward process for each node of the graph.
 The reward process models dynamic requests which appear and disappear.
 At each (discrete) time point, a new reward is generated
 for the node according to a stochastic process with expectation $\lambda$. 
 However, at each point, a previously generated reward disappears with a fixed probability $\delta$.
 When the node is visited, the entire reward is collected.
 The optimization problem for robot routing asks, given a graph and a reward process,
 what is the optimal (or $\epsilon$-optimal)  path a robot should traverse in this graph
 to maximize the expected reward?
 
 As an illustrating example for our setting, consider a vendor planning her path through a city. At each street corner, and at each
 time step, a new customer arrives with expectation $\lambda$, and an existing customer leaves with
 probability $\delta$. When the vendor arrives at the corner, she serves all the existing requests at once.
 We ignore other possible real-world features and behaviors  e.g., customers leaving queues if the queue length is long.
 How should the vendor plan her path?
 Similar problems can be formulated for traffic pooling \cite{Wongpiromsarn13}, for robot control \cite{HSCC2017},
 for patrolling \cite{HoshinoU16}, and many other scenarios.
 
 Despite the usefulness of robot routing in many scenarios involving dynamic appearance and disappearance of rewards,
 algorithms for its solution have not, to the best of our knowledge, been studied before.
 In this paper, we study two optimization problems:
 the \emph{value computation problem}, that asks for the maximal expected reward over a \emph{finite} or \emph{infinite}
 horizon, and the \emph{path computation problem}, that  asks for a path realizing the optimal  (or $\epsilon$-optimal) reward.
 The key observation to solving these problems
 is that the reward collection can be formulated as discounted sum problems over an extended
 graph, using the correspondence between stopping processes and discounted sum games. 
 
 For finite horizon robot routing we show that the value \emph{decision} problem (deciding if the maximal expected reward 
 is at least a certain amount) is NP-complete when the horizon bound is given in unary, 
 and the value and optimal path can be computed in exponential time using dynamic programming.
 
 For the infinite horizon problem, where the accumulated reward is defined as the long run average,
 we show that the value decision problem is NP-hard if the probability of a reward disappearing is
 more than a threshold dependent on the number of nodes.
 We show that computing the optimal long run average reward can be reduced 
 to a 1-player mean-payoff game on an \emph{infinite graph}. 
 By solving the mean payoff game on a finite truncation of this graph, we can approximate the 
 solution up to an arbitrary precision. 
 This gives us an algorithm that, for any given $\epsilon$,
 computes an $\epsilon$-optimal path
 in time exponential in the size of the original graph and logarithmic in $1/\epsilon$.
 Unlike finite mean-payoff 2-player games, 
 strategies which generate 
 optimal paths for robot routing even in the 1-player setting can require memory. 
 For the \emph{non-discounted} infinite horizon problem (that is, when rewards do not disappear)
 we show that the optimal path and value problems are solvable in polynomial time.\looseness=-1

\vspace*{-4mm}
\subparagraph*{Related work}
 The robot routing problem is similar in nature to a number of other problems studied in robot 
 navigation, vehicle routing, patrolling, and queueing network control, 
 but to the best of our knowledge has not been studied so far. 
 
 There exists a plethora of versions of the famous traveling salesman problem (TSP) which 
 explore the trade-off between the cost of the constructed path and its reward. 
 Notable examples include the orienteering problem~\cite{VansteenwegenSO11}, in which the number of locations visited in a 
 limited amount of time is to be maximized, vehicle routing with time-windows~\cite{Kolen87} and deadlines-TSP~\cite{BansalBCM04}, 
 which impose restrictions or deadlines on when locations should be visited, as well as discounted-reward-TSP~\cite{BlumCKLMM07} 
 in which soft deadlines are implemented by means of discounting. 
 Unlike in our setting, in all these problems, rewards are static, and there is no generation and accumulation of rewards, which is a key feature of our model.
 
 In the dynamic version of vehicle routing~\cite{BulloFPSS11} and the dynamic traveling repairman problem~\cite{BertsimasR91}, 
 tasks are dynamically introduced and the objective is to minimize the expected task waiting time. 
 In contrast, we focus on limit-average objectives, which are a classical way to combine rewards over infinite system runs.
 Patrolling~\cite{BrazdilHKRA15} is another graph optimization problem, motivated by operational security. 
 The typical goal in patrolling is to synthesize a strategy for a defender against 
 a single attack at an undetermined time and location, and is thus incomparable to ours. 
 A single-robot multiple-intruders patrolling setting that is close to ours is described in~\cite{HoshinoU16}, but there again the objective is to merely detect whether there is a visitor/intruder at a given room. Thus, the patrolling environment in~\cite{HoshinoU16} is described by the probability of detecting a visitor for each location. On the contrary, our model can capture \emph{counting patrolling problems}, where the robot is required not only to detect the presence of visitors but to register/count as many of them as possible.
 Another related problem is the information gathering problem~\cite{Stranders13}. The key difference between the information gathering setting and ours is that~\cite{Stranders13} assumes that making an observation earlier has bigger value than if a lot of observations have already been made. This restriction on the reward function is \emph{not} present in our model, since the reward value collected when visiting node $v$ at time $t$ (making observation $(v,t)$, in their terms) only depends on the last time when $v$ was previously visited, and not on the rest of the path (the other observations made, in their terms).
 
 Average-energy games~\cite{ChatterjeeP15, Bouyer2017} are a class of games on finite graphs in which the limit-average objective is defined by a double summation.
 The setting discussed in~\cite{ChatterjeeP15, Bouyer2017} considers static edge weights and no discounting.
 Moreover, the inner sum in an average-energy objective is over the whole prefix so far, while in our setting the inner sum spans from the last to the current visit of the current node, which is a crucial difference between these two settings.

 Finally, there is a rich body of work on multi-robot routing~\cite{WeiHJ16a,AmadorOZ14,MelvinKKTO07,EkiciKK09,EkiciR13} which is closely related to our setting. 
 However, the approaches developed there are limited to static tasks with fixed or linearly decreasing rewards. 
 The main focus in the multi-robot setting is the task allocation and coordination between robots, which is a dimension orthogonal to the aggregate reward collection problem which we study.
 
 Markov decision processes (MDP) \cite{Puterman94} seem superficially close to our model. 
 In an MDP, the rewards are determined statically as a function of the state and
 action. 
 In contrast, the dynamic generation and accumulation of rewards in our model, especially the
 individual discounting of each generated reward, leads 
 to algorithmic differences: 
 for example, while MDPs admit memoryless strategies for long run average
 objectives, strategies require memory in our setting and there is no obvious
 reduction to, e.g., an exponentially larger, MDP.
 
 We employed the reward structure of this article in \cite{HSCC2017} with the goal of synthesizing controllers for reward collecting Markov processes in continuous space.
 The work \cite{HSCC2017} is mainly focused on addressing the continuous dynamics of the underlying Markov process where the authors use abstraction techniques \cite{SA13} to provide approximately optimal controllers with formal guarantees on the performance while maintaining the probabilistic nature of the process. In contrast, we tackle the challenges of this problem with having a deterministic graph as the underlying dynamical model of the robot and study the computational complexity of the proposed algorithms thoroughly.

\vspace*{-2mm}
\subparagraph*{Contributions}
 We define a novel optimization problem for formalizing and solving reward collection in a metric
 space where stochastic rewards appear as well as disappear over time.
 \begin{itemize}
 	\item We consider reward-collection problems in a novel model with \emph{dynamic generation} and \emph{accumulation} of rewards, 
 	where each reward \emph{can disappear with a given probability}.
 	\item We study the value decision problem,  the value computation problem, and the path computation problem over a finite horizon. We show that the value decision problem is NP-complete when the horizon is given in unary. We describe a dynamic programming approach for computing the optimal value and an optimal path in exponential time.
 	\item We study the value decision problem,  the value computation problem, and the path computation problem over an infinite horizon. We show that for sufficiently large values of the disappearing factor $\delta$ the value decision problem is NP-hard. 
 	We provide an algorithm which for any given $\epsilon > 0$, computes an $\epsilon$-optimal path in time exponential in the size of the original graph and 
 	logarithmic in $1/\epsilon$. We demonstrate that strategies (in the 1-player robot routing games) which generate infinite-horizon optimal paths can require memory. 
 	
 \end{itemize}

 \section{Problem Formulation}\label{sec:problem}

\vspace*{-2mm}
\subparagraph*{Preliminaries and notation}
 A finite directed graph $G  = (V,E)$ consists of a 
 finite set of nodes  $V$ and a set of edges $E \subseteq V \times V$.
 A path $\pi = v_0,v_1,\ldots$ in $G$ is a finite or infinite sequence of nodes in $G$, 
 such that $(v_i,v_{i+1}) \in E$ for each $i  \geq 0$.
 We denote with $|\pi| = N$ the length (number of edges) of a finite path $\pi = v_0,v_1,\ldots,v_N$ and 
 write $\pi[i]= v_i$ and $\pi[0\ldots N] = v_0,v_1,\ldots v_N$.
 For an infinite path $\pi$, we define $|\pi| = \infty.$
 We also denote the cardinality of a finite set $U$ by $|U|$. 
 We denote by $\mathbb N\!=\!\{0,1,..\}$ and 
 $\mathbb Z_+ \!= \!\{1,2,..\}$ the sets of
 non-negative and positive integers respectively.
 We define $\mathbb Z[n,m] = \{n,n+1,\ldots,m\}$ for any $n,m\in\!\mathbb N,\,n\le m$.
 We denote with $\mathbb I(\cdot)$ the indicator function which takes a 
 Boolean-valued expression as its argument and returns $1$ if 
 this expression evaluates to true and 0 otherwise.
 
 \vspace*{-4mm}
 \subparagraph*{Problem setting}
 Fix a graph $G = (V,E)$.
 We consider a discrete-time setting where at each time step  $t\in \mathbb{N}$,
 at each node $v\in V$ a reward process generates rewards according to some probability distribution. 
 Once generated, each reward at a node decays according to a decaying function.
 A \emph{reward-collecting} robot starts out at some node $v_0\in V$
 at time $t=0$, and traverses one edge in $E$ at each time step.
 Every time the robot arrives at a node $v\in V$, it  collects the reward accumulated at $v$ since
 the last visit to $v$.
 Our goal is to compute the maximum expected reward that the robot can possibly collect, and to 
 construct an optimal path for the robot in the graph, i.e., a path whose expected total reward is maximal.
 
 To formalize reward accumulation, we define a function $\Last_{\pi}$ which (for path $\pi$)  maps
 an index $t \leq |\pi|$ and a node $v \in V$ to 
 the length of the path starting at the previous occurrence of $v$ in $\pi$ till position $t$;
 and to $t+1$ if $v$ does not occur in $\pi$ before time $t$:
 \begin{equation*}
 \Last_\pi(t,v) := \min\left(t+1, \ \left\{t-j \in \mathbb{N} \mid j < t,  \pi[j] = v\right\}\right).
 \end{equation*}
 
 \vspace*{-4mm}
 \subparagraph*{Reward functions}
 Let $\xi : \Omega\times V\to \mathbb R$ be a set of random variables defined on a sample space $\Omega$ and indexed by the set of nodes $V$. Then $\xi(\cdot,v)$, $v\in V,$ is a measurable function from $\Omega$ to $\mathbb R$ that generates a random reward at node $v$ at any time step.
 Let $\pi$ be the path in $G$ traversed by the robot.
 At time $t$, the position of the robot is the node $\pi[t]$, and the robot 
 collects the uncollected decayed reward generated at node $\pi[t]$ (since its last visit to $\pi[t]$)
 up till and including time $t$. Then, the robot traverses the edge $(\pi[t], \pi[t+1])$, and at time $t+1$ 
 it collects the rewards at node $\pi[t+1]$.
 
 The uncollected reward at time $t$ at a node $v$ given a path $\pi$ traversed by the robot
 is defined by the random variable
 \begin{equation*}
 \acc_{\pi}(t, v)\  := \hspace{-0.1in}\sum_{j = 0}^{\Last_\pi(t,v)-1}\hspace{-0.1in}\gamma(v)^{j} \xi(w(t-j),v),\quad w(\cdot)\in\Omega.
 \end{equation*} 
 
 The value $\gamma(v)$ in the above definition is a \emph{discounting factor} that models the probability that a reward at node $v$ survives for one more round, that is, the probability that a given reward instance at node $v$ disappears at any step is $1-\gamma(v)$.
 
 Note that the previous time a reward was collected at node $v$ was at time  $t- \Last_\pi(v, t)$, 
 the time node $v$ was last visited before $t$.
 Thus $\acc_{\pi}(t, v)$ corresponds to the rewards generated at node $v$ at times
 $t,\, t-1, \dots,\,  t- \Last_\pi(t,v)+1$, which have decayed by factors of
 $\gamma(v)^0, \gamma(v)^1, \dots, \gamma(v)^{\Last_\pi(t,v)-1},$ respectively.
 When traversing a path $\pi$, the robot collects the accumulated reward $\acc_{\pi}(t, \pi[t])$ at time $t$ at node $\pi[t]$.
 
 We define the \emph{expected finite $N$-horizon sum reward} for a path $\pi$ as:
 \begin{equation*}
 \rsum^{(N)}(\pi) \ :=\ \mathbb E\left[\sum_{t=0}^N \acc_{\pi}(t, \pi[t])\right].
 \end{equation*}
 
 Let $\lambda : V   \to \mathbb{R}_{\geq 0}$ be a function that maps each node $v \in V$  to the \emph{expected value of the reward generated at node $v$} for each time step, $\lambda(v) = \mathbb E\left[\xi(\cdot,v)\right]$.
 We assume that the rewards generated at each node are independent of the agent's move. Thus, the function $\lambda$ will be sufficient for our study, since we have
 \begin{align}
 &\rsum^{(N)}(\pi) \ =\ \sum_{t=0}^N \Eacc_{\pi}(t, \pi[t])  \text{, where }
 \Eacc_{\pi}(t, v):=\hspace{-0.1in} \sum_{j = 0}^{\Last_\pi(t,v)-1}\hspace{-0.1in}\gamma(v)^{j} \lambda(v).
 \label{eq:rsum-orig}
 \end{align}
 
 For an infinite path $\pi$, the \emph{limit-average} expected reward is defined as
 \begin{equation}
 \label{eq:rav-orig}
 \rav(\pi) \ =\  \liminf_{N \to \infty}\frac{\rsum^{(N)}\left( \pi\right)}{N+1}.
 \end{equation}
 The finite and infinite-horizon \emph{reward values} for a node $v$ are defined as the best rewards over all paths originating in $v$: $\Rsum^{(N)}(v) = \sup_{\pi}\left\{\rsum^{(N)}(\pi)\,|\,\pi[0]=v,|\pi| = N\right\}$ and
 $\Rav(v) =  \sup_{\pi}\left\{\rav(\pi)\,|\,\pi[0]=v,|\pi| = \infty\right\}$, respectively.
 The choice of limit-average in \eqref{eq:rav-orig} is due to the unbounded sum reward $\rsum^{(N)}(\pi)$ when $N$ goes to infinity.
 For a given path $\pi$, the sequence $\rsum^{(N)}\left( \pi\right)/(N\!+\!1)$  in \eqref{eq:rav-orig} may not converge.
 Thus we opt for the worst case limiting behavior of the sequence. 
 Alternatively, one may select the best case limiting behavior 
 $\limsup_{N \to \infty}$ in \eqref{eq:rav-orig} with no substantial change in the results of this paper.
 
 \vspace*{-4mm}
 \subparagraph*{Node-invariant functions $\lambda$ and $\gamma$ and definition of cost functions}
 In the case when the functions $\lambda$ and $\gamma$ are constant, we write  $\lambda$ and $\gamma$ for the respective constants.
 In this case, the expressions for $\rsum^{(N)}(\pi)$ and $\rav(\pi)$ can be simplified using the
 identity $1+\gamma + \gamma^2 + \dots + \gamma^{q-1} = \frac{1-\gamma^q}{1-\gamma}$ for $\gamma < 1$.
 Then we have
 \begin{equation}
 \label{eq:rsumNew}
 \begin{split}
 \rsum^{(N)}(\pi)& = \sum_{t=0}^N\sum_{j = 0}^{\Last_\pi(\pi[t],t)-1}\gamma^{j} \lambda
 =  \lambda\cdot \sum_{t=0}^N 
 \left(1 + \gamma +\ldots+\gamma^{\Last_\pi(t,\pi[t])-1}\right)\\
 &
 =   \lambda\cdot \sum_{t=0}^N  \frac{1 - \gamma^{\Last_\pi(t,\pi[t])}}{1-\gamma}
 = \frac{(N+1)\lambda}{1-\gamma}
 -  \frac{\lambda}{1-\gamma}\sum_{t=0}^N \gamma^{\Last_\pi(t,\pi[t])}.
 \end{split}
 \end{equation}
 The expression $\rav(\pi)$ can be simplified as:
 \begin{equation}
 \label{eq:ravNew}
 \rav(\pi) = \liminf_{N \to \infty}\frac{1}{N+1}\rsum^{(N)}(\pi)
 = 
 \frac{\lambda}{1-\gamma} - \frac{\lambda}{1-\gamma}\limsup_{N \to \infty}\frac{1}{N+1}
 \sum_{t=0}^N \gamma^{\Last_\pi(t,\pi[t])}.
 \end{equation}
 For the special case $\gamma=1$ (i.e., when the rewards are not discounted), the expression for the finite-horizon reward is
 $ \rsum^{(N)}(\pi) = \lambda\sum_{t=0}^{N}\Last_\pi(t,\pi[t])$.
 
 We define \emph{cost functions} that map a path $\pi$ to a real valued finite- or infinite-horizon cost:\looseness=-1
 \begin{equation}
 \label{eq:cost_functions}
 \csum^{(N)}(\pi) := \sum_{t=0}^{N} \gamma^{\Last_\pi(\pi[t],t)}\quad \text{ and }\quad
 \cav(\pi) := \limsup\limits_{N \to \infty}\frac{ \csum^{(N)}(\pi)}{N+1}.
 \end{equation}
 From Equations~\eqref{eq:rsumNew} and~\eqref{eq:ravNew}, the computation of optimal paths
 for the reward functions $\rsum^{(N)}$ and $\rav$ corresponds to computing
 paths that minimize the cost functions $\csum^{(N)}(\pi)$ and  $\cav(\pi)$, respectively.
 Analogously to $\Rsum^{(N)}(v)$ and $\Rav(v)$, the infimums of the cost functions in \eqref{eq:cost_functions} over paths are denoted by  $\Csum^{(N)}(v)$ and $\Cav(v)$ respectively.
 
 \begin{wrapfigure}{l!}{0.4\textwidth}
 	\centering
 	\begin{tikzpicture}[node distance=2 cm,auto,>=latex',line join=bevel,transform shape]
 	\node[circle,draw] at (0,0) (a) {$a$};
 	\node  [left of=a,circle,draw] (d) {$d$};
 	\node  [above right of =a,yshift=-.7cm,circle,draw] (b) {$b$};
 	\node  [below right of =a,yshift=.7cm,circle,draw] (c) {$c$};
 	\draw [->] (a) edge[bend left] (b);
 	\draw [->] (b) edge[bend left] (c);
 	\draw [->] (c) edge[bend left] (a);
 	\draw [->] (a) edge[bend right] (d);
 	\draw [->] (d) edge[bend right] (a);
 	\end{tikzpicture}
 	\caption{A graph $G_{\mathsf e} = (V_{\mathsf e},E_{\mathsf e})$ with two simple cycles  sharing a single node.}
 	\label{fig:graph1}
 \end{wrapfigure}
 
 \begin{example}
 	\label{ex:4node-reward}
 	Consider the graph $G_{\mathsf e} = (V_{\mathsf e},E_{\mathsf e})$ in Figure \ref{fig:graph1} with $V_{\mathsf e} = \{a,b,c,d\}$, which we will use as a running example throughout the paper. 
 	The functions $\lambda$ and $\gamma$ are constant.

 \end{example}

 Consider the finite path $\pi_1 = adabcad$. For the occurrences of node $a$ in $\pi_1$ we have 
 $\Last_{\pi_1}(0,a) = 1$, $\Last_{\pi_1}(2,a) = 2$, $\Last_{\pi_1}(5,a) = 3$, and similarly for the other nodes in $\pi_1$. The reward for $\pi_1$ as a function of $\lambda$ and $\gamma$ is 
 $\rsum^6(\pi_1) = \frac{7\lambda}{1-\gamma}
 -  \frac{\lambda}{1-\gamma}(\gamma + \gamma^2 + \gamma^2 + \gamma^4 + \gamma^5 + \gamma^3 + \gamma^5)$
 for $\gamma<1$ and $\rsum^6(\pi_1) = 22\lambda$ for $\gamma=1$.
 For the infinite path $\pi_2 = (abc)^\omega$ we have $\Last_{\pi_2}(0,a) = 1$, $\Last_{\pi_2}(1,b) = 2$, $\Last_{\pi_2}(2,c) = 3$ and 
 the value of $\Last$ is $3$ in all other cases. Thus we have 
 $\rav(\pi_2) = \frac{\lambda}{1-\gamma} - \frac{\lambda}{1-\gamma}\gamma^3$
 for $\gamma<1$ and $\rav(\pi_2) = 3\lambda$ for $\gamma=1$.
 Similarly, for $\pi_3 = (abcad)^\omega$ we have $\rav(\pi_3) = \frac{\lambda}{1-\gamma} - \frac{\lambda}{1-\gamma} \cdot \frac{(\gamma^2+\gamma^3+3\gamma^5)}{5}$
 for $\gamma<1$ and $\rav(\pi_3) = 4\lambda$ for $\gamma=1$.

 \subparagraph*{Problem statements}
 We investigate optimization and decision problems for finite and infinite-horizon robot routing.
 The \emph{value computation problems} ask for the computation of $\Rsum^{(N)}(v)$ and $\Rav(v)$. The corresponding decision problems asks to check if the respective one of these two quantities is greater than or equal to a given threshold $R\in\mathbb R$.\looseness=-1
 
 \begin{definition}[Value Decision Problems]
 	\label{def:ValueDecision}
 	Given 
 	a finite directed graph $G = (V,E)$, 
 	an expected reward function $\lambda:V\rightarrow\mathbb R_{\ge 0}$, 
 	a discounting function $\gamma:V\rightarrow (0,1]$,
 	an initial node $v_0 \in V$ and 
 	a threshold value $R \in \mathbb{R}$,
 	\begin{compactitem}
 		\item The \emph{finite horizon value decision problem} is to decide,
 		given $N$, if
 		$\Rsum^{(N)}(v_0) \geq R$.
 		\item The \emph{infinite horizon value decision problem} is to decide if
 		$\Rav(v_0) \geq R$.
 	\end{compactitem}
 \end{definition}

 For  a finite directed graph $G = (V,E)$,
 expected reward  and discounting functions $\lambda:V\rightarrow\mathbb R_{\ge 0}$ and
 $\gamma:V\rightarrow (0,1]$ and  $v_0 \in V$, a finite path $\pi$ is said to be an
 \emph{optimal path} for time-horizon $N$ if 
 (a)~$\pi[0] = v_0$ and $|\pi|=N$, and
 (b)~for every path $\pi'$ in $G$ with $\pi'[0] = v_0$ and $|\pi'|=N$ it holds that $\rsum^{(N)}(\pi) \geq \rsum^{(N)}(\pi')$.
 Similarly, an infinite path $\pi$ is said to be \emph{optimal for the infinite horizon} if $\pi[0] = v_0$ and for every infinite path $\pi'$ with $\pi'[0] = v_0$ in $G$ it holds that $\rav(\pi) \geq \rav(\pi')$.
 We can also define corresponding \emph{threshold paths}:  given a value $R$  a path $\pi$
 is said to be  threshold $R$-optimal  if  $\rsum^{(N)}(\pi) \geq R$ or $\rav(\pi) \geq R$, respectively.
An \emph{$\epsilon$-optimal} path is one which is 	$\Rsum^{(N)}(v_0) -\epsilon$ or $\Rav(v_0) - \epsilon$ threshold optimal (for finite or infinite horizon respectively).

 \begin{example}
 	\label{ex:4node-optimal}
 	Consider again the graph $G_{\mathsf e}$ shown in Figure \ref{fig:graph1}. Examining the expressions computed in Example~\ref{ex:4node-reward},
 	we have that $\rav(\pi_2) > \rav(\pi_3)$ for $\gamma=0.1$ and $\rav(\pi_3) > \rav(\pi_2)$ for $\gamma=0.9$.
 	Thus, in general, the optimal value depends on $\gamma$.
 	Due to the structure of the set of infinite paths in $G_{\mathsf e}$ we can analytically compute the optimal value $\Rav(v)$ for each $v \in V_{\mathsf e}$ as a function of $\gamma$ and a corresponding optimal path (the proof is in the appendix):
 	\begin{itemize}
 		\item if $\gamma \in [0,a_1]$, then $\Rav(v) = \frac{\lambda}{1-\gamma} - \frac{\lambda}{1-\gamma}\gamma^3$ and the path $(abc)^\omega$ is optimal;
 		\item if $\gamma \in [a_1,a_2]$, then $\Rav(v) = \frac{\lambda}{1-\gamma} - \frac{\lambda}{1-\gamma}\cdot\frac{(\gamma^2+4\gamma^3+2\gamma^5+\gamma^8)}{8}$ and 
 		$(abcabcad)^\omega$ is optimal;
 		\item if $\gamma \in [a_2,1]$, then $\Rav(v)\! = \! \frac{\lambda}{1-\gamma} \!- \!\frac{\lambda}{1-\gamma}\! \cdot\! \frac{(\gamma^2+\gamma^3+3\gamma^5\!)}{5}$ and the path $(abcad)^\omega$ is optimal.
 	\end{itemize}
 	The constants $a_1 \approx 0.2587$ and $a_2 \approx 0.2738$ are respectively the unique real roots of polynomials $\gamma^6+2\gamma^3-4\gamma+1$ and $5\gamma^6-14\gamma^3+12\gamma-3$ in the interval $(0,1)$.
 	Note that for $\gamma=1$ we have $\Rav(v) = 4\lambda$ which is achieved by $(abcad)^\omega$. The path $(abc)(ad)(abc)^2(ad)^2\hspace{-0.05in}\ldots(abc)^n(ad)^n\hspace{-0.05in}\ldots$, which is not ultimately periodic, also achieves the optimal reward.
 \end{example}
 
\vspace*{-7mm}
 \subparagraph*{Paths as strategies}
 We often refer to infinite paths as resulting from strategies (of the collecting agent).
 A \emph{strategy} $\sigma$ in $G$ is a function that maps finite paths $\pi[0\ldots m]$ to nodes such that if $\sigma(\pi)$ is defined then $(\pi[m],\sigma(\pi)) \in E$. Given an initial node $v_0$, the strategy $\sigma$ generates a unique infinite path $\pi$, denoted as $\outcome(v_0, \sigma)$.
 Thus, every infinite path $\pi = v_0,v_1,\ldots$ defines a unique strategy $\sigma_\pi$ where $\sigma_\pi(\pi[0\ldots i]) = v_{i+1}$, and $\sigma_\pi(\epsilon) = v_0$, and $\sigma_\pi$ is undefined otherwise. Clearly, $\outcome(v_0, \sigma_\pi) = \pi$.
 We say a strategy $\sigma$ is optimal for a path problem if the path 
 $\outcome(v_0, \sigma)$ is optimal.
 A strategy $\sigma$ is \emph{memoryless} if for every two paths $\pi'[0\ldots m'], \pi''[0\ldots m'']$ for which 
 $\pi'[m']=\pi''[m'']$, it holds that $\sigma(\pi') = \sigma(\pi'')$.
 We say that memoryless strategies suffice for the optimal path problem if there always exists
 a memoryless strategy $\sigma$ such that $\outcome(v_0, \sigma)$ is an optimal path.

\section{Finite Horizon Rewards: Computing $\Rsum^{(N)}(v)$}
\label{section:FiniteHorizon}
In this section we consider the finite-horizon problems associated with our model.
The following theorem summarizes the main results.

\begin{theorem}
	\label{thm:complexity-finite}
	Given $G = (V,\!E)$, expected reward and discounting functions, node $v\!\in\!V$, and 
	horizon $N\!\in\! \mathbb N$:
	\begin{compactenum}
		\item The finite-horizon value decision problem is NP-complete if 
		$N$ is in unary.
		\item 
		The value $\Rsum^{(N)}(v)$ for  $v\!\in\! V$  
		is computable in exponential time even if
		$N$  is in binary.
	\end{compactenum}
\end{theorem}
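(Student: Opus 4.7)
The plan is to treat the two parts of the theorem separately: the exponential-time algorithm via dynamic programming on an extended state, and the NP-completeness result via NP-membership by guess-and-check together with a reduction from Hamiltonian Path (with specified source) for hardness.

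For item~2, the crucial observation is that the future reward obtainable from a path prefix depends only on the current node and, for each $u \in V$, the quantity $a_u := \Last_\pi(t,u) \in \{1,\dots,t+1\}$. I would define $V_k(v, \vec a)$ as the maximum expected reward achievable in $k$ more steps starting at node $v$ with aged vector $\vec a$, giving the Bellman recursion
\[
V_k(v, \vec a) \;=\; \lambda(v)\cdot\textstyle\sum_{j=0}^{a_v-1}\gamma(v)^j \;+\; \max_{(v,v')\in E} V_{k-1}(v', \vec a'),
\]
where $a'_v=1$ and $a'_u = a_u+1$ for $u\neq v$. Since $a_u \le N+1$, the state space has size at most $|V|\cdot(N+2)^{|V|}\cdot(N+1)$ and each update takes $O(|V|)$ time. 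The total running time $2^{O(|V|\log N)}$ is exponential in the input size even when $N$ is in binary.

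For item~1, NP-membership is direct: with $N$ in unary, a path of length $N$ has polynomial size, and its reward $\rsum^{(N)}(\pi)$ is a polynomial-bit-size rational computable in polynomial time from~\eqref{eq:rsum-orig}. For NP-hardness I would reduce from Hamiltonian Path with a designated source $v_0$, which is NP-complete. Given $(G,v_0)$ with $|V|=n$, I would construct the robot-routing instance on $G$ augmented with a self-loop at every node (so that length-$(n-1)$ paths always exist from $v_0$), with constant $\lambda(v)=1$, constant $\gamma(v)=\tfrac{1}{2}$, horizon $N=n-1$ in unary, and threshold $R = \frac{1}{1-\gamma}\bigl(n - \sum_{t=1}^n \gamma^t\bigr)$ (a polynomial-size dyadic rational). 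The correctness reduces to the single observation that $\Last_\pi(t,\pi[t]) \le t+1$ for every $t$ and every path $\pi$, with equality iff $\pi[t]$ has not appeared in $\pi[0\dots t-1]$. Plugging this bound into~\eqref{eq:rsumNew} yields $\rsum^{(N)}(\pi) \le R$, with equality iff the $n$ positions of $\pi$ are all distinct, i.e., iff $\pi$ is a Hamiltonian path of $G$ starting at $v_0$. Since the self-loops can only introduce revisits and never create a new Hamiltonian traversal, the constructed instance is a yes-instance iff the original one is.

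The main technical subtleties will be: (i)~justifying that the aged vector together with the current node is a sufficient statistic for future reward, which follows directly from the definition of $\Last_\pi$ and the independence of per-node reward processes; (ii)~verifying that the self-loops added in the reduction cannot yield strictly higher reward than a Hamiltonian traversal, since a self-loop forces $\Last_\pi(t+1,\pi[t+1])=1$, the smallest possible value, which maximizes $\gamma^{\Last_\pi}$ and thus minimizes reward; and (iii)~book-keeping to confirm that the threshold $R$ and all intermediate computations have polynomial bit-complexity, which is why I fix $\gamma$ to a constant dyadic rational rather than leaving it symbolic.
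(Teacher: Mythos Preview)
Your proposal is correct and follows essentially the same approach as the paper: the augmented state $(v,\vec a)$ and Bellman recursion are exactly the paper's augmented weighted graph $\widetilde G$ restricted to $\mathbb Z[1,N{+}1]^{|V|}$, and the hardness argument is the same key inequality $\Last_\pi(t,\pi[t])\le t+1$ with equality iff no node repeats. The only differences are cosmetic: you add self-loops to guarantee existence of length-$N$ paths and fix $\gamma=\tfrac12$ to control bit complexity, whereas the paper leaves $\gamma\in(0,1)$ arbitrary and quantifies over all starting nodes; neither change affects the structure of the argument.
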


Analogous results hold for the related reward problem where
in addition to the initial node $v$, we are also given a destination node $v_f$, and
the objective is to go from $v$ to $v_f$ in at most $N$ steps while maximizing the reward.

The finite-horizon value problem is NP-hard by reduction from the Hamiltonian path problem
(the proof is in the appendix), even in the case of node-invariant $\lambda$ and $\gamma$.
Membership in NP in case $N$ is in unary follows from the fact that we can 
guess a path of length $N$ and check that the reward for that path is at least the
desired threshold value.

To prove the second part of the theorem, 
we  construct a finite \emph{augmented weighted graph}.

For simplicity, we give the proof for node-invariant $\lambda$ and $\gamma$, 
working with the cost functions $\csum$ and $\cav$. 
The augmented graph construction in the general case is a trivial generalization by changing the weights of the nodes, and the dynamic programming algorithm used for computing the optimal cost values is easily modified to compute the corresponding reward values instead.
For $\gamma < 1$ the objective is to minimize $\csum^{(N)}(\pi) = \sum_{t=0}^N \gamma^{\Last_\pi(t,\pi[t])}$ and for $\gamma = 1$ the objective is to maximize 
$\rsum^{(N)}(\pi) = \lambda\sum_{t=0}^{N}\Last_\pi(t,\pi[t])$ over paths $\pi$.

\vspace*{-4mm}
\subparagraph*{Augmented weighted graph}
Given a finite directed graph $G = (V,E)$ we define the \emph{augmented weighted graph} 
$\widetilde G = (\widetilde V, \widetilde E)$ 
which ``encodes'' the values $\Last_\pi(t,v) $ for the paths in $G$ explicitly
in the augmented graph node.
We can assume w.l.o.g.\ that $V= \set{1,2,\dots, \abs{V}}$.
\begin{compactitem}
	\item 
	The set of nodes $\widetilde V $ is $ V \times \mathbb Z_+^{\abs{V}}$ (the set
	$\widetilde V $ is infinite).
	A node $(v, b_1, b_2, \dots, b_{\abs{V}}) \in \widetilde V$ represents the fact that the current node is  $v$, and that for each node $u\in V$ the last visit to $u$ (before the current time) 
	was $b_u$ time units before the current time.
	\item The weight of a node  $(v, b_1, b_2, \dots, b_{\abs{V}}) \in \widetilde V$ 
	is $\gamma^{b_v}$.
	\item The set of edges $\widetilde E $ consists of edges
	$(v, b_1, b_2, \dots, b_{\abs{V}}) \ \rightarrow\  (v', b'_1, b'_2, \dots, b'_{\abs{V}})$ such that
	$(v,v') \in E$; and
	$b'_v = 1$, and $b'_u= b_u+1$ for all $u\neq v$.
\end{compactitem}
Let $\pi$ be a path in $G$. 
In the graph $\widetilde G$ there exists a unique path $\widetilde \pi$ that corresponds to $\pi$: 
\begin{equation}
\label{eq:aug_path}
\widetilde \pi = (\pi[0], 1, 1, \dots, 1)\, ,\, (\pi[1], b^1_1, b^1_2, \dots, b^1_{\abs{V}})\, ,\,
(\pi[2], b^2_1, b^2_2, \dots, b^2_{\abs{V}})\, , \dots
\end{equation}
starting from the node $(\pi[0], 1, 1, \dots, 1)$
such that for all $t$ and for all $v\in V$, we have $\Last_\pi(t,v) = b^t_{v}$.
Dually, for each path $\widetilde \pi $ in $\widetilde G$ starting from $(v_0, 1, 1, \dots, 1)$,
there exists a unique path $\pi$ in $G$ from the node $v_0$ such that $\Last_\pi(t,v) = b^t_{v}$ for all $t$ and $v$.

For a path $\widetilde \pi$ in the form of \eqref{eq:aug_path}
let
\begin{equation*}
{\widetilde c}_{\ssum}^{(N)} (\widetilde \pi)  :=   \sum_{t=0}^N \gamma^{b^t_{v_t}}\quad \text{ and }\quad
{\widetilde c}_{\sav}(\widetilde \pi)  :=   \limsup_{N \to \infty}\frac{1}{N+1}
\sum_{t=0}^N \gamma^{b^t_{v_t}}. \text{ Observe that:}
\end{equation*}
\begin{compactitem}
	\item ${\widetilde c}_{\ssum}^{(N)} (\widetilde \pi) $ is the sum of weights associated with the first $N+1$ nodes of $\widetilde \pi$.
	\item ${\widetilde c}_{\sav}(\widetilde \pi) $ is the limit-average of the weights associated with the nodes of $\widetilde \pi$.
\end{compactitem}
Thus, ${\widetilde c}_{\ssum}^{(N)}$ and 
${\widetilde c}_{\sav}$ 
define the classical total finite sum (shortest paths) and limit average objectives on weighted (infinite) graphs~\cite{ZwickP96}.
Additionally,
$  {\widetilde c}_{\ssum}^{(N)} (\widetilde \pi)  \ =  \csum^{(N)} ( \pi)$, and
$ {\widetilde c}_{\sav} (\widetilde \pi) \ = \cav( \pi)$ where $\pi$ is the path in $G$ corresponding to
the path $\widetilde \pi$.

Now, define ${\widetilde C}_{\ssum}^{(N)} 
\left((v_0, 1,1,\dots, 1) \right)$ as the infimum of ${\widetilde c}_{\ssum}^{(N)}(\widetilde \pi)$ over all paths
$\widetilde \pi$ with $\widetilde{\pi}[0] =  (v_0, 1,1,\dots, 1)$, and similarly for
$  {\widetilde C}_{\sav}\left((v_0, 1,1,\dots, 1) \right)$.
Then it is easy to see that 
$ \Csum^{(N)} ( v_0) ={\widetilde C}_{\ssum}^{(N)}  \left((v_0, 1,1,\dots, 1) \right)$
and
$\Cav( v_0) =  {\widetilde C}_{\sav}  \left((v_0, 1,1,\dots, 1) \right)$.
Thus, we can reduce the optimal path and value problems for $G$ to standard objectives in  $\widetilde G$. The major difficulty is that $\widetilde G$ is infinite. 
However, note that only the first $N+1$ nodes of $\widetilde \pi$ are relevant for the computation of  $ {\widetilde c}_{\ssum}^{(N)}   (\widetilde \pi)$.
Thus, the value of $ {\widetilde C}_{\ssum}^{(N)} ( (v_0, 1, \ldots, 1)) $ can be computed on a \emph{finite}
subgraph of $\widetilde G$, obtained by considering only the finite subset of
nodes $V\!\times \! \mathbb Z[1,N+1]^{\abs{V}} \ \subseteq \widetilde V$.

For $\gamma < 1$, we  obtain the value ${\widetilde C}_{\ssum}^{(N)}  (\widetilde \pi)$ by a standard dynamic programming algorithm which
computes the shortest path of length $N$ on this finite subgraph
starting from the node $(v_0, 1, 1,\dots, 1)$ (and keeping track of the number of steps).
For $\gamma = 1$, where the objective is to maximize $\rsum^{(N)}(\pi) = \lambda\sum_{t=0}^{N}\Last_\pi(t,\pi[t])$ over paths $\pi$, we proceed analogously.
Note that the subgraph used for the dynamic programming computations is of exponential size in terms of the size of $G$
and the description of $N$.
This gives the desired result in Theorem \ref{thm:complexity-finite}.

\section{Infinite Horizon Rewards: Computing $\Rav(v)$}
\label{section:InfiniteHorizon}
Since we consider finite graphs, every infinite path eventually stays in some strongly 
connected component (SCC). 
Furthermore, the value of the reward function $\rav(\pi)$ does not change if 
we alter/remove a finite prefix of the path $\pi$. 
Thus, it suffices to restrict our attention to the SCCs of the graph: 
the problem of finding an optimal path from a node $v \in V$ reduces to 
finding the SCC that gives the highest reward among the SCCs reachable from node $v$. 
Therefore, we  assume that the graph is strongly connected.

\subsection{Hardness of Exact $\Rav(v)$ Value Computation}

Since it is sufficient for the hardness results, we consider node-invariant $\lambda$ and $\gamma$.

\vspace*{-4mm}
\subparagraph*{Insufficiency of memoryless strategies.} 
Before we turn  to the computational hardness of the value decision problem, 
we look at the \emph{strategy complexity} of the optimal path problem and show that optimal strategies need memory.

\begin{proposition}
	Memoryless strategies do not suffice for the infinite horizon 
	problem.\looseness=-1
\end{proposition}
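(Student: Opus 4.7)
The plan is to use the running example graph $G_{\mathsf e}$ of Figure~\ref{fig:graph1} as a witness. The crucial structural observation is that every node of $G_{\mathsf e}$ other than $a$ has a unique out-neighbor, so any memoryless strategy starting at $a$ is determined by its single choice at $a$ (go to $b$ or go to $d$) and consequently produces one of exactly two infinite paths, $(abc)^\omega$ or $(ad)^\omega$.

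I would then compare these two memoryless outcomes against $\Rav(a)$ using the analysis already carried out in Examples~\ref{ex:4node-reward} and~\ref{ex:4node-optimal}. A direct application of~\eqref{eq:ravNew} to the eventually periodic $\Last$-sequences of the two candidate paths yields $\rav((abc)^\omega) = \lambda(1+\gamma+\gamma^2)$ and $\rav((ad)^\omega) = \lambda(1+\gamma)$, so $(abc)^\omega$ is the strictly better memoryless candidate for every $\gamma\in(0,1)$. Example~\ref{ex:4node-optimal} however shows that $(abc)^\omega$ is optimal only on $[0,a_1]$: for any $\gamma\in(a_1,1)$ the supremum $\Rav(a)$ is strictly larger and is attained by a path such as $(abcabcad)^\omega$ or $(abcad)^\omega$ that uses both cycles and hence necessarily varies its choice at $a$ between $b$ and $d$. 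Picking any such $\gamma$ (e.g.\ $\gamma=0.3$) yields the desired strict inequality $\Rav(a) > \max\{\rav((abc)^\omega),\,\rav((ad)^\omega)\}$, so no memoryless strategy is optimal.

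The only non-routine ingredient in this plan is the strict inequality $\rav((abc)^\omega) < \Rav(a)$ for $\gamma>a_1$, but this is precisely the content of Example~\ref{ex:4node-optimal}: the constant $a_1$ is characterised there as the threshold at which $(abc)^\omega$ and a memory-requiring competitor tie, so strictly above $a_1$ the competitor strictly wins. Since all the numerical work has already been carried out in the example, the proof of the proposition itself reduces to the two-line branching-structure argument above.
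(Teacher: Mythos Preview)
Your proposal is correct and follows essentially the same approach as the paper: use the graph $G_{\mathsf e}$, observe that a memoryless strategy's choice at $a$ forces the path into exactly one of the two cycles $(abc)^\omega$ or $(ad)^\omega$, and then invoke Example~\ref{ex:4node-optimal} to conclude that for $\gamma>a_1$ the optimal path must alternate between both cycles. Your version is slightly more explicit (you compute $\rav$ for the two memoryless candidates and single out $(abc)^\omega$ as the better one), but the argument is the same.
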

\begin{proof}
	Consider Example~\ref{ex:4node-optimal}.
	A memoryless strategy results in paths which cycle exclusively either in the
	left cycle, or the right cycle (as from node $a$, it prescribes a move to either
	only  $b$ or only $d$).
	As shown in Example~\ref{ex:4node-optimal}, the optimal path for $\gamma\!\ge\! a_1$
	needs to visit both cycles.
	Thus, memoryless strategies do not suffice for this example.
\end{proof}

For $\omega$-regular objectives, strategies based on \emph{latest visitation records}
\cite{GH82,Thomas95}, which depend only on the \emph{order} of 
the last node visits (\ie, for all node pairs $v_1\neq v_2 \in V$, whether the last visit of $v_1$ was before that of $v_2$ or vice versa)
are sufficient.
However, we can show that such strategies do not suffice either.
To see this, recall the  graph in Figure~\ref{fig:graph1} for which the optimal path for $\gamma=0.26$ is $(abcabcad)^\omega$. Upon visiting node $a$ this strategy chooses one of the nodes $b$ or $d$ depending on the \emph{number} of visits to $b$ since the last occurrence of $d$. On the other hand, every strategy based \emph{only on the order of last visits} is not able to count the number of visits to $b$ and thus, results in a path that ends up in one of $(abc)^\omega$, or $(ad)^\omega$, or $(abcad)^\omega$, which are not optimal for this $\gamma$. The proof is given in the appendix.
It is open if finite memory strategies are sufficient for the infinite-horizon optimal path problem.

\vspace*{-4mm}
\subparagraph*{NP-Hardness of the value decision problem}
To show NP-hardness of the
infinite-horizon value decision problem,
we first give bounds on $\Rav(v_0)$. The following Lemma proven in the appendix, establishes these bounds.
\begin{lemma}
	\label{lemma:Hamilt}
	For any graph $G = (V,E)$ and any node $v_0\in V$, $\Rav(v_0)$ is bounded as
	\begin{equation}
	\label{eq:bounds}
	\lambda\frac{1-\gamma^p}{1-\gamma}\le \,\Rav(v_0)\,\le \lambda\frac{1-\gamma^{n_{\mathsf v}}}{1-\gamma},
	\end{equation}
	where $n_{\mathsf v} = |V|$ and $p$ is the length of the longest simple cycle in $G$.
\end{lemma}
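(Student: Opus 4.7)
My plan is to prove the two bounds separately: the lower bound by exhibiting an explicit path, and the upper bound by combining a simple telescoping identity with Jensen's inequality applied to $x\mapsto \gamma^x$.

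For the lower bound, I would pick a simple cycle $C$ of length $p$ in $G$, and use strong connectedness to prepend a finite path $\tau$ from $v_0$ to some vertex of $C$, producing an infinite path $\pi$ that goes around $C$ forever. Because $C$ is a simple cycle, after one full revolution each subsequent visit satisfies $\Last_\pi(t,\pi[t]) = p$ exactly, so the finite prefix $\tau$ contributes nothing to the $\limsup$ in \eqref{eq:ravNew}. Substituting yields $\rav(\pi) = \lambda(1-\gamma^p)/(1-\gamma)$, witnessing $\Rav(v_0) \ge \lambda(1-\gamma^p)/(1-\gamma)$. The analogous construction with the simplified $\gamma=1$ formula handles that boundary case.

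For the upper bound, the first step I would carry out is the purely combinatorial inequality $\sum_{t=0}^N \Last_\pi(t,\pi[t]) \le n_{\mathsf v}(N+1)$ for every path $\pi$ and every horizon $N$. Grouping the positions by the node visited, if node $v$ occurs in $[0,N]$ at times $t_1^v<\cdots<t_{k_v}^v$, the values $\Last_\pi(t_i^v,v)$ form a telescoping sum equal to $t_{k_v}^v+1 \le N+1$. Summing over the at most $n_{\mathsf v}$ visited nodes gives the claim, so the empirical average of the $\Last$ values is at most $n_{\mathsf v}$. Then, since $x\mapsto \gamma^x$ is convex and decreasing for $\gamma\in(0,1)$, Jensen's inequality yields
\[
\frac{1}{N+1}\sum_{t=0}^N \gamma^{\Last_\pi(t,\pi[t])} \;\ge\; \gamma^{\frac{1}{N+1}\sum_{t=0}^N \Last_\pi(t,\pi[t])} \;\ge\; \gamma^{n_{\mathsf v}}.
\]
Passing to $\limsup$ and substituting into \eqref{eq:ravNew} gives $\rav(\pi) \le \lambda(1-\gamma^{n_{\mathsf v}})/(1-\gamma)$ for every $\pi$ starting at $v_0$, hence the upper bound on $\Rav(v_0)$. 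The case $\gamma = 1$ follows directly from the identity $\rav(\pi) = \lambda\cdot\limsup_N \frac{1}{N+1}\sum_t \Last_\pi(t,\pi[t])$ together with the same combinatorial bound, and matches $\lambda(1-\gamma^{n_{\mathsf v}})/(1-\gamma)$ interpreted in the limit $\gamma \to 1$.

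The hard part is recognizing that Jensen on $\gamma^{(\cdot)}$ is the right tool: the upper bound is sharp only when all $\Last$ values are essentially equal (which happens on a Hamiltonian cycle), and this is exactly the equality case in Jensen's inequality. Once this is in place, the telescoping bound on the sum of $\Last$ values is routine, and the lower bound is a direct construction using strong connectedness.
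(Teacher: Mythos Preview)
Your proposal is correct and is essentially the same argument as the paper's: the paper invokes the AM--GM inequality on the terms $a_t=\gamma^{\Last_\pi(t,\pi[t])}$, which yields precisely your Jensen inequality for the convex map $x\mapsto\gamma^x$, and then bounds the exponent via the identity $\sum_{t\le N}\Last_\pi(t,\pi[t])=\sum_v y_v$ with $y_v=t^v_{k_v}+1\le N+1$, i.e., exactly your telescoping observation. The lower bound via repeating a longest simple cycle (with a finite prefix from $v_0$, irrelevant for the $\limsup$) is also what the paper does.
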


\begin{corollary}
	\label{cor:Hamilt}
	If the graph $G = (V,E)$ contains a Hamiltonian cycle, 
	any path $\pi = (v_1v_2\ldots v_{|V|})^\omega$, with $v_1v_2\ldots v_{|V|}v_1$ being a Hamiltonian cycle,
	is optimal and the optimal value of $\Rav(v_0)$ is exactly the upper bound in \eqref{eq:bounds}.
\end{corollary}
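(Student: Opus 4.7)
The plan is to verify directly that the path obtained from a Hamiltonian cycle achieves the upper bound established in Lemma \ref{lemma:Hamilt}; then optimality and the exact value of $\Rav(v_0)$ follow at once.

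First, without loss of generality take $v_1 = v_0$, which is possible because a Hamiltonian cycle visits every node and can be ``rotated'' to start anywhere. For the periodic path $\pi = (v_1 v_2 \ldots v_{n_{\mathsf v}})^\omega$, I will analyze $\Last_\pi(t, \pi[t])$. For $t = 0, 1, \ldots, n_{\mathsf v} - 1$, the node $\pi[t]$ appears for the first time, so $\Last_\pi(t, \pi[t]) = t+1$; for every $t \geq n_{\mathsf v}$, because the Hamiltonian cycle visits each node exactly once per period, the previous occurrence of $\pi[t]$ is at position $t - n_{\mathsf v}$, giving $\Last_\pi(t, \pi[t]) = n_{\mathsf v}$.

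Next, I would plug this into the cost formula \eqref{eq:cost_functions}:
\begin{equation*}
\csum^{(N)}(\pi) = \sum_{t=0}^{n_{\mathsf v}-1} \gamma^{t+1} + \sum_{t=n_{\mathsf v}}^{N} \gamma^{n_{\mathsf v}}.
\end{equation*}
The first sum is a bounded constant (independent of $N$), while the second contributes $(N+1-n_{\mathsf v})\gamma^{n_{\mathsf v}}$. Dividing by $N+1$ and letting $N \to \infty$ kills the transient term and yields $\cav(\pi) = \gamma^{n_{\mathsf v}}$. Substituting into \eqref{eq:ravNew} gives
\begin{equation*}
\rav(\pi) = \frac{\lambda}{1-\gamma} - \frac{\lambda}{1-\gamma}\gamma^{n_{\mathsf v}} = \lambda\frac{1-\gamma^{n_{\mathsf v}}}{1-\gamma},
\end{equation*}
which is precisely the upper bound in \eqref{eq:bounds}. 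The edge case $\gamma = 1$ is handled analogously using the unsimplified expression for $\rav$.

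Since the value of $\rav(\pi)$ matches the upper bound from Lemma \ref{lemma:Hamilt} on $\Rav(v_0)$, the path $\pi$ must be optimal and $\Rav(v_0)$ must equal this common value. There is no real obstacle here; the only mild subtlety is making sure that the finite prefix contribution in $\csum^{(N)}$ vanishes in the limit-average, which is immediate since it is $O(1)$ while the denominator grows linearly in $N$.
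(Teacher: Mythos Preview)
Your proposal is correct and follows the same approach the paper intends: the corollary is stated without an explicit proof because it is immediate from Lemma~\ref{lemma:Hamilt}, once one observes that a Hamiltonian cycle has length $p = n_{\mathsf v}$, making the lower and upper bounds in \eqref{eq:bounds} coincide. Your direct computation of $\rav(\pi)$ for the periodic Hamiltonian path is exactly the verification that $\rav(\pi_p) = \lambda(1-\gamma^p)/(1-\gamma)$, which the paper asserts (but does not spell out) in the proof of the lower bound of Lemma~\ref{lemma:Hamilt}.
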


The following lemma establishes a relationship between the value of optimal paths and the existence of a Hamiltonian cycle in the graph, and is useful for providing a lower bound on the computational complexity of the value decision problem.

\begin{lemma}
	\label{lemma:existHamilt}
	If the upper bound in \eqref{eq:bounds} is achieved with a path $\pi$
	for some $\gamma< 1/|V|$, then the graph contains a Hamiltonian cycle $\rho$ that occurs in $\pi$ infinitely often.
\end{lemma}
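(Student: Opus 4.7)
The plan is to exploit the tension between a Jensen-type lower bound on $\cav(\pi)$ and the hypothesis $\gamma < 1/n_{\mathsf v}$ (where $n_{\mathsf v} = |V|$), which makes any deviation from canonical Hamiltonian behavior strictly costly. Using \eqref{eq:ravNew} and \eqref{eq:cost_functions}, the hypothesis that $\pi$ achieves the upper bound in \eqref{eq:bounds} is equivalent to $\cav(\pi) = \gamma^{n_{\mathsf v}}$. Set $L_t := \Last_\pi(t, \pi[t])$. For every $N$, $\sum_{t=0}^{N} L_t \leq n_{\mathsf v}(N+1)$: the contribution from visits to any single node $v$ telescopes to the time of the last visit of $v$ in $[0,N]$, which is at most $N+1$, and at most $n_{\mathsf v}$ nodes contribute. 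Applying Jensen's inequality to the convex function $x \mapsto \gamma^x$ gives $\frac{1}{N+1}\sum_{t=0}^{N}\gamma^{L_t} \geq \gamma^{n_{\mathsf v}}$ for all $N$, so $\liminf$ and $\limsup$ both equal $\gamma^{n_{\mathsf v}}$ and the limit exists.

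The core step is a counting argument showing that the asymptotic density of ``good'' times (those with $L_t = n_{\mathsf v}$) equals $1$. Partition $\{0,\ldots,N\}$ into short indices (where $L_t < n_{\mathsf v}$, of cardinality $N_s$), matching ($L_t = n_{\mathsf v}$, cardinality $N_m$), and long ($L_t > n_{\mathsf v}$, cardinality $N_l$). Using $\sum L_t \leq n_{\mathsf v}(N+1)$ together with $\sum_{\text{short}} L_t \geq N_s$ and $\sum_{\text{long}} L_t \geq (n_{\mathsf v}+1)N_l$ yields $N_l \leq (n_{\mathsf v}-1)N_s$. Since each short term contributes at least $\gamma^{n_{\mathsf v}-1} - \gamma^{n_{\mathsf v}}$ and each long term at least $-\gamma^{n_{\mathsf v}}$ to $\sum_{t=0}^{N}(\gamma^{L_t} - \gamma^{n_{\mathsf v}})$, combining both inequalities gives
\begin{equation*}
\sum_{t=0}^{N}(\gamma^{L_t} - \gamma^{n_{\mathsf v}}) \geq N_s\,\gamma^{n_{\mathsf v}-1}\bigl(1 - n_{\mathsf v}\gamma\bigr).
\end{equation*}
Dividing by $N+1$, the left-hand side tends to $0$ by the preceding paragraph, while the hypothesis $\gamma < 1/n_{\mathsf v}$ makes the factor $1 - n_{\mathsf v}\gamma$ strictly positive; hence $N_s/(N+1) \to 0$, and then $N_l/(N+1) \to 0$, so $N_m/(N+1) \to 1$.

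To conclude, if only finitely many length-$(n_{\mathsf v}+1)$ windows of consecutive indices were entirely matching, then the non-matching indices would have asymptotic density at least $1/(n_{\mathsf v}+1) > 0$, contradicting the previous step. So there exist arbitrarily large $T$ with $L_T = L_{T+1} = \cdots = L_{T+n_{\mathsf v}} = n_{\mathsf v}$. In any such window, if $\pi[i] = \pi[j]$ with $T \leq i < j \leq T + n_{\mathsf v} - 1$ then $L_j \leq j - i < n_{\mathsf v}$, a contradiction; so $\pi[T], \pi[T+1], \ldots, \pi[T + n_{\mathsf v} - 1]$ is a permutation of $V$. Furthermore, $L_{T + n_{\mathsf v}} = n_{\mathsf v}$ forces $\pi[T + n_{\mathsf v}] = \pi[T]$, so $\pi[T\ldots T + n_{\mathsf v}]$ traces a Hamiltonian cycle of $G$. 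Since $G$ has only finitely many Hamiltonian cycles, pigeonhole yields a single $\rho$ occurring infinitely often in $\pi$. The main obstacle is the joint counting step: combining $N_l \leq (n_{\mathsf v}-1)N_s$ with the pointwise cost bounds to isolate $N_s$, which is precisely where the threshold $\gamma < 1/n_{\mathsf v}$ is needed; the rest is bookkeeping.
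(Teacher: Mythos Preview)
Your argument is correct and reaches the stated conclusion, but it proceeds quite differently from the paper's proof. The paper argues by contradiction: assuming no Hamiltonian cycle of $G$ occurs infinitely often in $\pi$, one may discard a finite prefix and assume none occurs at all; then every window $\pi[n..n+n_{\mathsf v}]$ must contain a repetition strictly inside (otherwise the window would trace a Hamiltonian cycle), giving at least one index $t$ per window with $\Last_\pi(t,\pi[t]) < n_{\mathsf v}$. Summing these over disjoint windows yields $\cav(\pi) \geq \gamma^{n_{\mathsf v}-1}/n_{\mathsf v}$, whence $\gamma^{n_{\mathsf v}} \geq \gamma^{n_{\mathsf v}-1}/n_{\mathsf v}$ and $\gamma \geq 1/n_{\mathsf v}$, the desired contradiction. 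That argument is short and uses only the crude lower bound $\gamma^{L_t} \geq \gamma^{n_{\mathsf v}-1}\mathbb I(L_t < n_{\mathsf v})$.

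Your route is more quantitative: you first establish via telescoping and Jensen that $\frac{1}{N+1}\sum_t \gamma^{L_t} \geq \gamma^{n_{\mathsf v}}$ for \emph{every} $N$ (so the limit exists), then a three-way count of short/matching/long indices combined with the bound $\sum_t L_t \leq n_{\mathsf v}(N+1)$ yields $N_l \leq (n_{\mathsf v}-1)N_s$ and hence $\sum_t(\gamma^{L_t}-\gamma^{n_{\mathsf v}}) \geq N_s\,\gamma^{n_{\mathsf v}-1}(1-n_{\mathsf v}\gamma)$; the hypothesis $\gamma < 1/n_{\mathsf v}$ then forces $N_s/(N+1)\to 0$ and $N_m/(N+1)\to 1$. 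This density-one statement is strictly more information than the paper extracts, and it lets you locate Hamiltonian windows directly rather than via contradiction. The cost is a longer argument; the benefit is a structural description of $\pi$ (almost every index is ``matching'') that the paper's proof does not give. Both proofs use the threshold $\gamma < 1/n_{\mathsf v}$ at exactly one inequality, and both collapse if that hypothesis is dropped.
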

\begin{proof}
	The proof is by contradiction. Suppose $\pi$ does not visit any Hamiltonian cycle infinitely often. Then it visits each such cycle at most a finite number of times.
	Without loss of generality we can assume that the path doesn't visit any such cycles, since the total number of Hamiltonian cycles is finite. We have for $n_{\mathsf v}=|V|$
	\begin{align*}
	\csum^{(N)}(\pi) &\ge \sum_{t=0}^N \gamma^{\Last_\pi(t,\pi[t])}\mathbb I(\Last_\pi(t,\pi[t])<n_{\mathsf v})
	\ge \gamma^{n_{\mathsf v}-1}\sum_{t=0}^N\mathbb I(\Last_\pi(t,\pi[t])<n_{\mathsf v}).
	\end{align*}
	Now let's look at $\pi_{n,n_{\mathsf v}} = \pi[n(n+1)\ldots(n+n_{\mathsf v})]$ a finite sub-path of length $n_{\mathsf v}$. There is at least one node repeated in $\pi_{n,n_{\mathsf v}}$ since the graph has $n_{\mathsf v}$ distinct nodes. Note that if $\pi[n]=\pi[n+n_{\mathsf v}]$, there must be another repetition due to the lack of Hamiltonian cycles in the path.
	In either case, there is an $i_n\in\mathbb Z[n,n+n_{\mathsf v}]$ such that $\Last_\pi(i_n,\pi[i_n])<n_{\mathsf v}$.
	\begin{align*}
	& \csum^{(N)}(\pi) \ge \gamma^{n_{\mathsf v}-1}\sum_{t=0}^N\mathbb I(\Last_\pi(t,\pi[t])<n_{\mathsf v})\ge \gamma^{n_{\mathsf v}-1}\left\lfloor\frac{N}{n_{\mathsf v}}\right\rfloor\\
	& \Rightarrow \cav(\pi)\ge \gamma^{n_{\mathsf v}-1}\limsup_{N\rightarrow\infty}\frac{1}{N+1}\left\lfloor\frac{N}{n_{\mathsf v}}\right\rfloor
	= \gamma^{n_{\mathsf v}-1}\frac{1}{n_{\mathsf v}}
	\Rightarrow \gamma^{n_{\mathsf v}} \ge 
	\left(\gamma^{n_{\mathsf v}-1}\right)/n_{\mathsf v} \ \Rightarrow\  \gamma \ge 1/n_{\mathsf v},
	\end{align*}
	which is a contradiction with the assumption that $\gamma< 1/n_{\mathsf v}$.
	Then a necessary condition for $\cav(\pi) \!= \!\gamma^{n_{\mathsf v}}$ with some 
	$\gamma\!<\!1/n_{\mathsf v}$ is the existence of a Hamiltonian cycle.
\end{proof}
\noindent\emph{Remark.}
Following the same reasoning of the above proof it is possible to improve the upper bound in \eqref{eq:bounds} as
$\Rav(v_0)\le \lambda\left(1-\gamma^p/n_{\mathsf v}\right)/(1-\gamma)$
for small values of $\gamma$, where $p$ is the length of the longest simple cycle of the graph.

\begin{theorem}\label{thm:complexity-infinite}
	The infinite horizon value decision problem is NP-hard for $\gamma\! < \!1/|V|$.
\end{theorem}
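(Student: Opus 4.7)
The plan is to reduce from the directed Hamiltonian Cycle problem, which is NP-complete and may be assumed restricted to strongly connected graphs (strong connectivity is polynomial-time checkable and is a necessary condition for admitting a Hamiltonian cycle). Given an instance $G = (V,E)$ with $n = |V|$, I build a robot-routing instance on the same $G$ with node-invariant $\lambda = 1$ and $\gamma = 1/(n+1)$, an arbitrary start node $v_0$, and threshold
\begin{equation*}
R \;:=\; \frac{1-\gamma^{n}}{1-\gamma}.
\end{equation*}
The numbers $\gamma$ and $R$ have bit-length polynomial in $n$, and $\gamma < 1/|V|$ as required, so the reduction runs in polynomial time and lies in the claimed parameter regime.

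For the forward direction, if $G$ has a Hamiltonian cycle $v_1 v_2 \dots v_n v_1$, Corollary~\ref{cor:Hamilt} guarantees that the path $(v_1 v_2 \dots v_n)^\omega$ achieves $\rav$ equal to the upper bound in~\eqref{eq:bounds}, i.e., $R$. Since $\rav$ is insensitive to finite prefixes and $G$ is strongly connected, some infinite path starting at $v_0$ also has $\rav = R$, so $\Rav(v_0) \geq R$.

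For the backward direction, assume $G$ has no Hamiltonian cycle and let $\pi$ be any infinite path in $G$. By the pigeonhole argument used in the proof of Lemma~\ref{lemma:existHamilt}, in every window $\pi[k\ldots k+n]$ of $n+1$ consecutive positions there must be an index $i$ with $\Last_\pi(i,\pi[i]) < n$, since otherwise the only possible repetition in the window would trace a Hamiltonian cycle. This yields the \emph{path-uniform} bound $\cav(\pi) \geq \gamma^{n-1}/n$, and combining it with $\rav(\pi) = \tfrac{\lambda}{1-\gamma}(1-\cav(\pi))$ from~\eqref{eq:ravNew} gives
\begin{equation*}
\Rav(v_0) \;\leq\; \frac{1 - \gamma^{n-1}/n}{1-\gamma} \;<\; \frac{1 - \gamma^{n}}{1-\gamma} \;=\; R,
\end{equation*}
where the strict inequality is equivalent to $\gamma < 1/n$ and holds by construction.

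The delicate point is that $\Rav(v_0)$ is a supremum and need not be attained by any single path; the NO direction therefore must rely on a bound on $\cav(\pi)$ that is \emph{independent} of $\pi$. The hypothesis $\gamma < 1/|V|$ is exactly what guarantees a strict gap between $\tfrac{1-\gamma^n}{1-\gamma}$ (the value in the Hamiltonian case) and $\tfrac{1-\gamma^{n-1}/n}{1-\gamma}$ (the best achievable without any Hamiltonian cycle), and this is the only step in which that hypothesis is used.
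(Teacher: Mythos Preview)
Your proof is correct and follows essentially the same route as the paper's: reduce from Hamiltonian Cycle, set the threshold to the upper bound $\lambda(1-\gamma^{n})/(1-\gamma)$ from~\eqref{eq:bounds}, use Corollary~\ref{cor:Hamilt} for the forward direction, and use the pigeonhole bound $\cav(\pi)\ge \gamma^{n-1}/n$ from the proof of Lemma~\ref{lemma:existHamilt} for the backward direction. Your only real addition is the explicit observation that the NO case must be argued via a \emph{path-uniform} lower bound on $\cav$ (since $\Rav$ is a supremum that need not be attained); the paper states the backward direction tersely as ``applying Lemma~\ref{lemma:existHamilt}'', but the uniform bound you extract is exactly what its proof establishes, so the two arguments coincide.
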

\begin{proof}
	We reduce the Hamiltonian cycle problem to the infinite horizon optimal path problem. 
	Given a graph $G = (V,E)$, we fix some $\lambda$ and $\gamma < 1/|V|$. We show that $G$ is Hamiltonian iff
	$\Rav(v_0) \geq \lambda\left(1-\gamma^{|V|}\right)/(1-\gamma)$.
	If $G$ has a Hamiltonian cycle $v_1v_2\ldots v_{|V|}v_1$, then the infinite path $\pi = (v_1v_2\ldots v_{|V|})^\omega$ has reward $\rav(\pi) =\lambda\left(1-\gamma^{|V|}\right)/(1-\gamma)$, for any choice of $\gamma$.
	For the other direction, applying Lemma~\ref{lemma:existHamilt} with 
	$\gamma < 1/|V|$ implies that $G$ is Hamiltonian.
\end{proof}

\vspace*{-4mm}
\subparagraph*{Non-discounted rewards ($\gamma = 1$) and node-invariant function $\lambda$}
Contrary to the finite-horizon non-discounted case, the infinite-horizon optimal path and value problems for $\gamma=1$ can be solved in polynomial time. To see this, note that the reward expression $\rsum^{(N)}(\pi)$ can be written as $\rsum^{(N)}(\pi) = \lambda\sum_{v\in V}y(v,\pi,N)$,
where $y(v,\pi,N)$ is defined as $y(v,\pi,N) = 1+\max U,\,\,\text{for } U = \{j\in \mathbb{N}|j\le N, \pi[j]=v\}\cup\{-1\}$. 
Then, we can bound the reward by\looseness=-1
\hspace{-3mm}\begin{align*}
\rav(\pi)&\le \lim_{N\rightarrow\infty}\frac{\lambda \sum_{i=1}^{|V(\pi)|}(N+1-i+1)}{N+1} = \lim_{N\rightarrow\infty}\frac{\lambda |V(\pi)|(2N-|V(\pi)|+3)}{2(N+1)}
= \lambda|\Inf(\pi)|,
\end{align*}
where $\Inf(\pi)$ is the set of nodes visited in the path $\pi$ infinitely often.
This indicates that the maximum reward is bounded by $\lambda$ times the maximal size of a reachable SCC in the graph $G$. This upper bound is also achievable: we can construct an optimal path by finding a maximal SCC reachable from the initial node and a (not necessarily simple) cycle $v_1\ldots v_n v_1$ that visits all the nodes in this SCC. Then, a subset of optimal paths contains paths of the form $\pi_0\cdot(v_1\ldots v_n)^\omega$, where $\pi_0$ is any finite path that reaches $v_1$.
This procedure can be done with a computational time polynomial in the size of $G$.

\begin{example}
	Consider the graph in Figure \ref{fig:graph1}. The optimal reward for the infinite-horizon non-discounted case is $4\lambda$, achievable by the path $\pi = (abcad)^\omega$. Another path which is not ultimately periodic but achieves the same reward
	$\pi' = (abc)(ad)(abc)^2(ad)^2\ldots(abc)^n(ad)^n\ldots$.
\end{example}

Note that for the case $\gamma=1$ there always exists an ultimately periodic optimal path, such a path is generated by a finite-memory strategy.

\subsection{Approximate Computation of $\Rav(v)$}

In the previous section we discussed how to solve the infinite-horizon value and path computation problems for the non-discounted case.
Now we show how the infinite-horizon path and value computation problems for $\gamma < 1$ can be effectively approximated. We first define functions that over and underapproximate $\Cav(v)$ (thus also $\Rav(v)$) and establish bounds on the error of these approximations.
Given an integer $K \in \mathbb{N}$, approximately optimal paths and an associated interval approximating $\Cav(v)$ can be computed using a finite augmented graph 
$\widetilde G_K$ based on the augmented graph $\widetilde G$ of
Section~\ref{section:FiniteHorizon}. Intuitively, $\widetilde G_K$  is obtained from $\widetilde G$ by pruning nodes that have a component greater than $K$ in their augmentation.
By increasing the value of $K$, the approximation error can be made arbitrarily small. 
\looseness=-1

We describe the approximation algorithm for node-invariant $\lambda$ and $\gamma$. 
The results generalize trivially to the case when $\lambda$ and $\gamma$ are not node-invariant by choosing 
$K$ large enough to satisfy the condition that bounds the approximation error for each $\lambda(v)$ and $\gamma(v)$.

\vspace*{-4mm}
\subparagraph*{Approximate cost functions}
Consider the following  functions from $V^*\!\!\times\!\mathbb N$ to $\mathbb R_{\geq 0}$:
\begin{align*}
\ocsum^{(N)}(\pi,K) & := \sum_{t=0}^N \gamma^{\min\{K,\Last_\pi(t,\pi[t])\}} \quad \text{ and }\\
\ucsum^{(N)}(\pi,K) & := \sum_{t=0}^N \gamma^{\Last_\pi(t,\pi[t])}\mathbb I(\Last_\pi(t,\pi[t])\le K).
\end{align*}
Informally, for $\ocsum^{(N)}(\pi,K)$, if the last visit to node $\pi[t]$ occurred
more than $K$  time units before time $t$, the cost is $\gamma^K$, rather
than the  original smaller amount $ \gamma^{\Last_\pi(t,\pi[t])}$.
For $\ucsum^{(N)}(\pi,K)$, if the last visit to $\pi[t]$ occurred
more than $K$  time steps before time $t$, then the cost is $0$.
For both, if the last visit to the node $\pi[t]$ occurred
less than or equal to $K$ steps before, we pay the actual cost
$\gamma^{\Last_\pi(t,\pi[t])}$.
The above definition implies that $\ucsum^{(N)}(\pi,K) \le \csum^{(N)}(\pi)\le \ocsum^{(N)}(\pi,K)$ for every $\pi$.
Then we have $\underline C(v_0,K) \leq \Cav(v_0) \leq  \overline C(v_0,K)$, where we define
\begin{align*}
\underline C(v_0,K)& :=
\inf_{\pi,\pi[0] = v_0}\limsup_{N \to \infty}\frac{\ucsum^{(N)}(\pi,K)}{N+1} \quad\text{ and }\\
\overline C(v_0,K) & :=
\inf_{\pi,\pi[0] = v_0}\limsup_{N \to \infty}\frac{\ocsum^{(N)}(\pi,K)}{N+1}.
\end{align*}
The difference between the upper and lower bounds can be tuned by selecting $K$:
\begin{align*}
& \ocsum^{(N)}(\pi,K)- \ucsum^{(N)}(\pi,K) = \sum_{t=0}^N \gamma^K\mathbb I(\Last_\pi(t,\pi[t])\ge K+1)\\
& \Longrightarrow 0\le \ocsum^{(N)}(\pi,K)- \ucsum^{(N)}(\pi,K) \le (N+1)\gamma^K\\
& \Longrightarrow \ucsum^{(N)}(\pi,K)\le \ocsum^{(N)}(\pi,K)\le \ucsum^{(N)}(\pi,K) + (N+1)\gamma^K\\
& \Longrightarrow \underline C(v_0,K)\le \overline C(v_0,K) \le \underline C(v_0,K)+\gamma^K.
\end{align*}
Therefore $\Cav(v_0)$ belongs to the interval $[\underline C(v_0,K),\overline C(v_0,K)]\subset[0,\gamma^K]$ and the length of the interval is at most $\gamma^K$. In order to guarantee the total error of $\epsilon>0$ for the actual reward $\Rav(v_0)$\footnote{Since $\Rav(v_0)$ is upper bounded by $\lambda/(1-\gamma)$, we assume that the required accuracy $\epsilon$ is less than this upper bound.}, we can select $K\in \mathbb N$ according to
$
\frac{\lambda}{1-\gamma}\gamma^K\le\epsilon\Longrightarrow
K\ge \ln\left[\frac{\epsilon(1-\gamma)}{\lambda}\right]/\ln\gamma.
$
The value $\Cav(v_0)$ can be computed up to the desired degree of accuracy by
computing either $\overline C(v_0,K)$ or $\underline C(v_0,K)$.
Next, we give the procedure for computing 
$\overline C(v_0,K)$ (the procedure for $\underline C(v_0,K)$ is similar).
It utilizes a finite augmented weighted graph $\widetilde G_K$.

\vspace*{-4mm}
\subparagraph*{Truncated augmented weighted graph $\widetilde G_K$}
Recall the infinite augmented weighted graph $\widetilde G$ from 
Section~\ref{section:FiniteHorizon}.
We define a truncated version $\widetilde G_K= (\widetilde V_K, \widetilde E_K)$ 
of  $\widetilde G$  where
we only keep track of   $\Last_\pi(t,v)$ values less than or equal to $K$.
For $\widetilde G_K$ we define two weight functions 
$\overline w$ and $\underline w$, for $\ocsum^{(N)}$  and $\ucsum^{(N)}$ respectively.
\begin{compactitem}
	\item 
	The set of nodes $\widetilde V_K $ is $ V \times \mathbb   Z[0,K]^{\abs{V}}$.
	\item For a node $\widetilde v = (v, b_1, b_2, \dots, b_{\abs{V}}) \in \widetilde V$,
	\begin{itemize}
		\item the weight $\overline\omega (\widetilde v)$ is $\gamma^{b_v}$ if $b_v > 0$ and $\gamma^{K}$ otherwise.
		\item the weight $\underline{\omega} (\widetilde v)$ is $\gamma^{b_v}$ if $b_v > 0$ and $0$ otherwise.
	\end{itemize}
	\item The set of edges $\widetilde E_K $ consists of edges
	$(v, b_1, b_2, \dots, b_{\abs{V}}) \ \rightarrow\  (v', b'_1, b'_2, \dots, b'_{\abs{V}})$ such that
	$(v \rightarrow v') \in E$,
	$b'_v = 1$, and for $u\neq v$:
	\begin{compactitem}
		\item $b'_u= b_u+1$ if  $b_u > 0$ and $b_u+1 \leq K$,
		\item $b'_u=0$  if $b_u=0$ or  $b_u+1 > K$.
	\end{compactitem}
	Thus, in $\widetilde G_K$ we specify two different weights for path $\pi$ at time step $t$ in the case when the previous visit to $\pi[t]$ in $\pi$ was more than $K$ time steps ago.
\end{compactitem}

Similarly to the infinite augmented graph we have 
\begin{compactitem}
	\item $\ocsum^{(N)}(\pi,K)$ is the sum of weights assigned by $\overline w$ to  the first $(N+1)$ nodes of $\widetilde \pi$.
	\item $\ucsum^{(N)}(\pi,K)$ is the sum of weights assigned by $\underline w$ to  the first $(N+1)$ nodes of $\widetilde \pi$.
\end{compactitem}
It is easy to see that $\overline C(v_0,K)$ is the least possible limit-average
cost with respect to $\overline w$ in $\widetilde G_K$ starting from the node $\widetilde v_0 = \left(v_0, 1,1,\dots, 1\right)$.
The same holds for $\underline C(v_0,K)$ with $\underline w$. Below we show how to compute  $\overline C(v_0,K)$. The case $\underline C(v_0,K)$ is analogous, and thus omitted.

\vspace*{-4mm}
\subparagraph*{Algorithm for computing $\overline C(v_0,K)$}
We now describe a method to compute $\overline C(v_0,K)$ as the least possible limit-average
cost in $\widetilde G_K$ with respect to $\overline w$. It is well-known that this can be reduced to the computation of the
minimum cycle mean in the weighted graph \cite{ZwickP96}, which in turn can be done using the algorithm from~\cite{Karp78} that we now describe.

As before, first we assume that $\widetilde G_K$ is strongly connected. 
For every $\widetilde v\in\widetilde V_K$, and every $n \in \mathbb Z_+$, we 
define $W_n(\widetilde v)$ as the minimum weight of a path of length $n$ from $\widetilde v_0$ to $\widetilde  v$; if no such path exists, then $W_n(\widetilde v) = \infty$. The values $W_n(\widetilde v)$ can be computed by the recurrence
\begin{equation*}
W_n(\widetilde v) = \min_{(\widetilde u, \widetilde v)\in\widetilde E_K}\{W_{n-1}(\widetilde u)+ \overline w(\widetilde v)\},\quad n = 1,2,\ldots,|\widetilde V_K|,
\end{equation*}
with the initial conditions $W_0(\widetilde v_0) = 0$ and 
$W_0(\widetilde v) = \infty$ for any $\widetilde v\ne \widetilde v_0$.
Then, we can compute 
\begin{equation*}
\overline C(v_0,K) = \min_{\widetilde v\in\widetilde V_K}\max_{n\in\mathbb Z[0,|\widetilde V_K|-1]}\left[\frac{W_{|\widetilde V_K|}(\widetilde v)-W_n(\widetilde v)}{|\widetilde V_K|-n} \right].
\end{equation*}

A cycle $\widetilde \rho$ with the computed minimum mean can be extracted by fixing the node $\widetilde v$ which achieves the minimum in the above value and the respective path length $n$ and finding a minimum-weight path from $\widetilde v_0$ to $\widetilde v$ and a cycle of length $|\widetilde V_K| - n$ within this path. 
Thus, the path $\widetilde \pi$ in $\widetilde G_K$  obtained by repeating $\widetilde \rho$ infinitely often realizes this value. A path $\pi$ from $v_0$ in $G$ with $\cav(\pi) = \overline C(v_0,K)$ is obtained from $\widetilde \pi$ by projection on $V$.

In the general case, when  $\widetilde G_K$ is not strongly connected we have to consider each of its SCCs reachable from $\widetilde v_0$, and determine the one with the least minimum cycle mean.

For each SCC with $m$ edges and $n$ nodes the computation of the quantities $W$ requires $O(n\cdot m)$ operations. The computation of the minimum cycle mean for this component requires $O(n^2)$ further operations. Since $n \leq m$ because of the strong connectivity, the overall computation time for the SCC is $O(n\cdot m)$. Finally, the SCCs of $\widetilde G_K$ can be computed in time $O(|\widetilde V_K| + |\widetilde E_K|)$ \cite{Tarjan71}. This gives us the following result.

\begin{lemma}\label{lem:approx-cost}
	The value $\overline C(v_0,K)$ and a path $\pi$ with limit average cost $\cav(\pi) = \overline C(v_0,K)$ can be computed in time polynomial in the size of $\widetilde G_K$.
\end{lemma}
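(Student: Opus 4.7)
The plan is to reduce the computation of $\overline C(v_0,K)$ to a standard minimum cycle mean problem on the finite weighted graph $\widetilde G_K$, and then to invoke Karp's polynomial-time algorithm. First I would formally argue that $\overline C(v_0,K)$ equals the infimum, over paths $\widetilde \pi$ starting at $\widetilde v_0 = (v_0,1,\ldots,1)$, of the limit-average of the weights $\overline w$ along $\widetilde \pi$. Since $\widetilde G_K$ is finite, any such infinite path eventually stays within a single SCC reachable from $\widetilde v_0$, and the limit-average value is determined by that terminal SCC. This reduces the problem to: for each SCC $S$ reachable from $\widetilde v_0$, compute the minimum mean cycle weight inside $S$, and take the minimum over all such SCCs.

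For a fixed strongly connected subgraph, I would apply Karp's characterization: the minimum cycle mean equals $\min_{\widetilde v}\max_{n}[(W_{|S|}(\widetilde v) - W_n(\widetilde v))/(|S| - n)]$, where $W_n$ is the minimum weight of a length-$n$ walk from a chosen reference node. These quantities satisfy the stated dynamic-programming recurrence and are computable in $O(n\cdot m)$ time for an SCC with $n$ nodes and $m$ edges, followed by $O(n^2)$ work to evaluate the $\min$-$\max$ expression. Because strong connectivity forces $n \le m$, the per-SCC cost is $O(n\cdot m)$, and summing over all SCCs yields a total bounded by $O(|\widetilde V_K|\cdot|\widetilde E_K|)$. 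Decomposing $\widetilde G_K$ into SCCs takes only $O(|\widetilde V_K|+|\widetilde E_K|)$ via Tarjan's algorithm.

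To extract the witness path $\pi$, I would, in the optimal SCC $S^\ast$, identify the pair $(\widetilde v, n)$ achieving the minimum-of-maximum, recover a minimum-weight walk of length $|S^\ast|$ from the reference node to $\widetilde v$ by back-pointers stored during the DP, and isolate within it a closed cycle $\widetilde \rho$ of length $|S^\ast| - n$ whose mean weight matches $\overline C(v_0,K)$. Then a path from $\widetilde v_0$ into $S^\ast$ concatenated with $\widetilde \rho^\omega$ yields an infinite path in $\widetilde G_K$ with limit-average cost exactly $\overline C(v_0,K)$. Projecting this path onto its $V$-component, using the correspondence from Section~\ref{section:FiniteHorizon} between paths in $\widetilde G$ (and hence $\widetilde G_K$) and paths in $G$, produces the desired $\pi$ with $\cav(\pi) = \overline C(v_0,K)$.

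The main obstacles are essentially bookkeeping: one must justify that the weight function $\overline w$ is well defined on $\widetilde G_K$ so that Karp's algorithm applies verbatim (which is immediate since weights depend only on the node, not on history), and that limit-average behavior on $\widetilde G_K$ indeed matches $\overline C(v_0,K)$ as defined via $\ocsum^{(N)}$ on $G$ (which follows from the observation in the excerpt that $\ocsum^{(N)}(\pi,K)$ is the sum of $\overline w$-weights along the first $N+1$ nodes of the corresponding $\widetilde \pi$). Once these are in place, polynomial-time computability in $|\widetilde V_K| + |\widetilde E_K|$ is a direct consequence of the known complexity of Karp's algorithm and Tarjan's SCC decomposition.
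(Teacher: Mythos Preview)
Your proposal is correct and follows essentially the same approach as the paper: reduce $\overline C(v_0,K)$ to the minimum cycle mean problem on the finite weighted graph $\widetilde G_K$, solve it via Karp's algorithm on each SCC reachable from $\widetilde v_0$ (with the same $O(n\cdot m)$ per-SCC analysis and Tarjan's decomposition), and extract the witness path by locating the optimal cycle and projecting back to $G$. The only cosmetic difference is that the paper phrases the cycle extraction as finding a cycle of length $|\widetilde V_K|-n$ inside a minimum-weight walk from $\widetilde v_0$ to the minimizing $\widetilde v$, whereas you describe it via back-pointers in the DP; the content is the same.
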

The same result can be established for the under approximation $\underline C(v_0,K)$.

\noindent\emph{Remark.}
The number of nodes of $\widetilde G_K$ is $|\widetilde V_K| = |V| \times (K+1)^{|V|}$. 
For the approximation procedure described above it suffices to augment the graph with the information about which nodes were visited in the last $K$ steps and in what order. Thus, we can alternatively consider a graph with $|V| \times {|V|}^K$ nodes in the case when the computed  $K$ is smaller than $|V|$.

\begin{example}
	\label{ex:4node-approx}
	Figure \ref{fig:graph2} shows the SCC of the augmented graph $\widetilde G_K$ reachable from initial node $(a,1,1,1,1)$ for the graph given in Figure \ref{fig:graph1} and $K=5$.
	The nodes in the SCC are denoted by shorthands in the picture, \eg, $a_1 = (a,3,2,1,4)$.
	The labels on the nodes correspond to the values of the weight functions $\overline w$ and $\underline w$.
	As we can see, $\widetilde G_5$ already contains simple cycles $(a_2b_2c_2a_2),(a_3b_3c_2a_1b_1c_1a_2d_2a_3),(a_3b_3c_2a_1d_1a_3)$,
	which correspond to the optimal paths presented in Example~\ref{ex:4node-optimal}.
	The outcome of the minimum cycle mean for $\widetilde G_5$ will be the same with the two sets of weights only for the first and third interval for $\gamma$ determined in Example~\ref{ex:4node-optimal}, but will be different for the second case in which the term $\gamma^8$ is replaced respectively by $\textcolor{blue}{\gamma^5}$ and $\textcolor{red}{0}$ for the upper and lower bounds.
	\begin{figure}
		\begin{minipage}{0.55\textwidth}
			\includegraphics[scale=0.7]{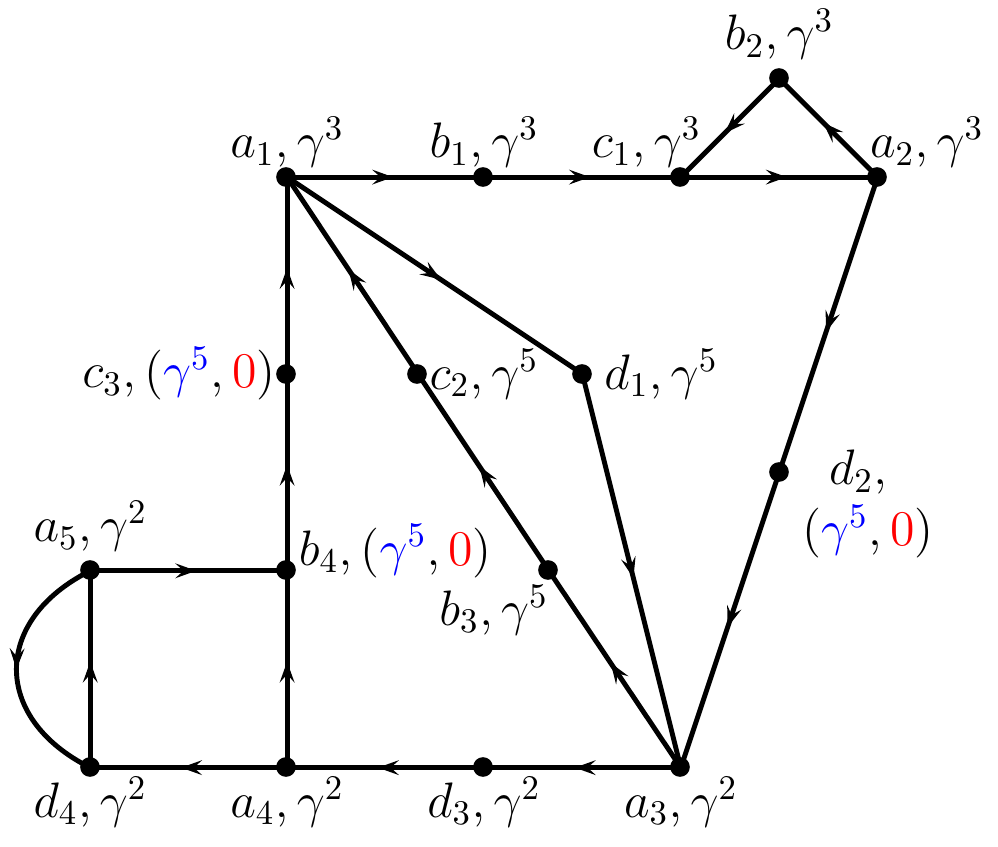}
		\end{minipage}
		\begin{minipage}{0.45\textwidth}
			$a_1 = (a,3,2,1,4)$
			\quad $a_2 = (a,3,2,1,0)$\\
			$a_3 = (a,2,4,3,1)$
			\quad $a_4 = (a,2,0,5,1)$\\
			$a_5 = (a,2,0,0,1)$\\[2mm]
			$b_1 = (b,1,3,2,5)$
			\quad $b_2 = (b,1,3,2,0)$\\
			$b_3 = (b,1,5,4,2)$
			\quad$ b_4 = (b,1,0,0,2)$\\[2mm]
			$c_1 = (c,2,1,3,0) $
			\quad $c_2 = (c,2,1,5,3) $\\
			$c_3 = (c,2,1,0,3)$\\[2mm]
			$d_1 = (d,1,3,2,5)$
			\quad $d_2 = (d,1,3,2,0)$\\
			$d_3 = (d,1,5,4,2)$
			\quad $d_4 = (d,1,0,0,2)$
		\end{minipage}
		\caption{The SCC of the finite augmented graph $\widetilde G_5$ for the graph in Figure~\ref{fig:graph1}. The node labels are the values of the functions $\overline w$ and $\underline w$ (in black if $\overline w = \underline w$, otherwise respectively in \textcolor{blue}{blue} and \textcolor{red}{red}).}
		\label{fig:graph2}
		\vspace{-.5cm}
	\end{figure}
\end{example}

Theorem~\ref{thm:approx-reward}, a consequence of Lemma~\ref{lem:approx-cost}, states the  approximate computation result.
\begin{theorem}\label{thm:approx-reward}
	Given a graph $G = (V,E)$, node $v_0 \in V$, rational values $\lambda$ and $0< \gamma < 1$, and error bound $\epsilon > 0$, we can compute in time polynomial in $|V|(K+1)^{|V|}$ for $K = \ln\left[\frac{\epsilon(1-\gamma)}{\lambda}\right]/\ln\gamma$ (i.e., exponential in $|V|$),
	infinite paths $\pi_{\mathsf{under}}$ and $\pi^{\mathsf{over}}$ and values $\rav(\pi_\mathsf{under})$ and $\rav(\pi_\mathsf{over})$ such that $\rav(\pi_\mathsf{under})\leq \Rav(v_0) \leq \rav(\pi_\mathsf{over})$ and $\rav(\pi_\mathsf{over}) -\rav(\pi_\mathsf{under})\le \epsilon$.
\end{theorem}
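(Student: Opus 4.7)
The plan is to assemble the pieces already developed in the section: choose $K$ large enough so the gap $\gamma^K$ between $\overline{C}(v_0,K)$ and $\underline{C}(v_0,K)$, when converted into a gap on the reward scale, is at most $\epsilon$; then invoke Lemma~\ref{lem:approx-cost} (and its analog for the underapproximation) to produce the two paths and their values in the truncated augmented graph $\widetilde G_K$; finally project these paths back to $G$ and verify the running-time bound.

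First I would fix $K := \lceil \ln[\epsilon(1-\gamma)/\lambda]/\ln\gamma \rceil$. Since $0<\gamma<1$, this is the smallest integer for which $\gamma^K \leq \epsilon(1-\gamma)/\lambda$, i.e.\ $(\lambda/(1-\gamma))\gamma^K \leq \epsilon$. Using the already-established chain $\underline{C}(v_0,K) \leq \Cav(v_0) \leq \overline{C}(v_0,K) \leq \underline{C}(v_0,K) + \gamma^K$, and the identity $\rav(\pi) = \lambda/(1-\gamma) - (\lambda/(1-\gamma))\cav(\pi)$ derived in~\eqref{eq:ravNew}, this choice of $K$ translates the cost-side gap of at most $\gamma^K$ into a reward-side gap of at most $(\lambda/(1-\gamma))\gamma^K \leq \epsilon$.

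Next I would apply Lemma~\ref{lem:approx-cost}, which gives in time polynomial in $|\widetilde V_K|$ both $\overline{C}(v_0,K)$ and a path $\widetilde\pi^{\mathsf{over}}$ in $\widetilde G_K$ realizing it, and, by the symmetric version for the underapproximation, $\underline{C}(v_0,K)$ together with a path $\widetilde\pi_{\mathsf{under}}$ realizing it. Projecting these paths onto $V$ yields infinite paths $\pi^{\mathsf{over}}$ and $\pi_{\mathsf{under}}$ in $G$ whose $\cav$ values coincide with the computed $\overline{C}(v_0,K)$ and $\underline{C}(v_0,K)$ respectively (this is the key correspondence noted after the definition of $\widetilde G_K$). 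Setting $\rav(\pi^{\mathsf{over}}) := \lambda/(1-\gamma) - (\lambda/(1-\gamma))\cdot \underline{C}(v_0,K)$ and $\rav(\pi_{\mathsf{under}}) := \lambda/(1-\gamma) - (\lambda/(1-\gamma))\cdot \overline{C}(v_0,K)$ (note the swap, because lower cost corresponds to higher reward), the sandwich on the cost side becomes exactly $\rav(\pi_{\mathsf{under}}) \leq \Rav(v_0) \leq \rav(\pi^{\mathsf{over}})$, with gap at most $\epsilon$ by the previous paragraph.

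For the complexity bound I would simply count: $|\widetilde V_K| = |V|(K+1)^{|V|}$ by construction, and $|\widetilde E_K|$ is at most a constant (depending on the out-degree of $G$) times this. Lemma~\ref{lem:approx-cost} runs in polynomial time in this size, and decomposing into SCCs is linear in $|\widetilde V_K|+|\widetilde E_K|$, yielding the claimed bound which is exponential in $|V|$ and polynomial in $\log(1/\epsilon)$ (the latter coming from $K = O(\log(\lambda/((1-\gamma)\epsilon))/\log(1/\gamma))$). No single step is really an obstacle here; the only mild care required is to make sure that when projecting $\widetilde\pi$ back to $G$ the path is genuinely infinite (which follows because the optimal minimum-mean cycle extracted in Lemma~\ref{lem:approx-cost} can be traversed indefinitely in $\widetilde G_K$), and to handle the case when $\widetilde G_K$ is not strongly connected by ranging over all SCCs reachable from $\widetilde v_0 = (v_0,1,1,\dots,1)$, exactly as indicated in the paragraph preceding Lemma~\ref{lem:approx-cost}.
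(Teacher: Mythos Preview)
Your proposal is correct and follows essentially the same approach as the paper, which simply states that Theorem~\ref{thm:approx-reward} is a consequence of Lemma~\ref{lem:approx-cost} and the preceding analysis of the approximate cost functions. Your write-up expands this one-line justification into exactly the expected assembly: choose $K$ from the bound $(\lambda/(1-\gamma))\gamma^K\le\epsilon$, invoke Lemma~\ref{lem:approx-cost} (and its $\underline C$ analogue) on $\widetilde G_K$, convert the cost sandwich $\underline C\le \Cav\le\overline C$ into the reward sandwich via~\eqref{eq:ravNew}, and read off the running time from $|\widetilde V_K|=|V|(K+1)^{|V|}$; the only cosmetic wrinkle is that your labeling of which extracted path is $\pi^{\mathsf{over}}$ versus $\pi_{\mathsf{under}}$ gets crossed when you perform the cost-to-reward swap, but the values and inequalities you derive are the right ones.
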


\subsection{Approximation via Bounded Memory}

The algorithm presented earlier is based on an augmentation of the graph with a specific structure updated deterministically and whose size depends on the desired quality of approximation. Furthermore, in this graph there exists a memoryless strategy with approximately optimal reward. 
We show that this allows us to quantify how far from the optimal reward value is 
a strategy that is optimal among the ones with bounded memory of fixed size. 

First, we give the definition of memory structures. 
A \emph{memory structure} $\mc M = (M,m_0,\delta)$ for a graph $G = (V,E)$ consists of a 
set $M$, initial memory $m_0 \in M$ and a memory update function $\delta :  M \times V \to M$.
The memory update function can be extended to $\delta^* :  V^* \to M$ by defining
$\delta^*(\epsilon) = m_0$ and $\delta^*(\pi\cdot v) = \delta(\delta^*(\pi),v)$.
A memory structure $\mc M$ together with a function $\tau : V \times M \to V$ such that $(v,\tau(v,m))\in E$ for all $v \in V$ and $m \in M$, and an initial node $v_0 \in V$ define a strategy $\sigma : V^* \to V$ where
$\sigma(\epsilon) = v_0$ and $\sigma(\pi\cdot v) = \tau(v,\delta^*(\pi))$.  
In this case we say that the strategy $\sigma$ has memory $\mc M$.
Given a bound $B \in \mathbb N$ on memory size we define the finite 
graph $G \times B = (V^{G\times B}, E^{G \times B})$, where 
$V^{G\times B} = V \times \mathbb Z[1,B]$; and
$E^{G \times B} = \{((v,i),(v',j)) \in (V\! \times\! \mathbb Z[1,B])\times (V\! \times\! \mathbb Z[1,B]) \mid (v,v') \in E\})$.

Memoryless strategies in this graph precisely correspond to strategies that have memory of size $B$. More precisely, for each strategy $\sigma$ in $G = (V,E)$ that has memory $\mc M = (M,m_0,\delta)$
there exists a memoryless strategy $\sigma_{\mc M}$ in $G \times |M|$ such that the projection of $\outcome((v_0,m_0),\sigma_{\mc M})$ on $V$ is $\outcome(v_0,\sigma)$. Conversely, for each memoryless strategy  $\sigma_{\mc M}$ in $G \times B$ there exist a memory structure $\mc M = (M,m_0,\delta)$ with $|M| = B$ and a strategy $\sigma$ with memory $\mc M$ in $G$ such that the projection of $\outcome((v_0,m_0),\sigma_{\mc M})$ on $V$ is $\outcome(v_0,\sigma)$.
\begin{example}
	Recall that in Example~\ref{ex:4node-optimal} we established that an optimal path for $\gamma=0.26$ is the path $(abcabcad)^\omega$. In the graph $G \times 3$ there exists a simple cycle corresponding to this path, namely
	$(a,1)(b,1)(c,2)(a,2)(b,2)(c,3)(a,3)(d,1)(a,1)$.
	Thus, the optimal path $(abcabcad)^\omega$ corresponds to a strategy with memory
	size of $3$.
\end{example}
An optimal strategy among those with memory of a given size $B$ can be computed by inspecting the  memoryless strategies in $G \times B$ and selecting the one with maximal reward (these strategies are finitely but exponentially many). 

A strategy returned by the approximation algorithm presented earlier uses a memory structure of size $(K+1)^{|V|}$. If $(K+1)^{|V|} \leq B$, then this strategy is isomorphic to some memoryless strategy $\sigma$ in $G \times B$. Since the reward achieved by the optimal memoryless strategy in $G \times B$ is at least that of $\sigma$,  its value is at most $\frac{\lambda}{1-\gamma}\gamma^{K}$ away from  $\Rav(v_0)$.
Now, 
$(K+1)^{|V|} \leq B$ iff $K \leq \left\lfloor\frac{\ln B}{\ln \abs{V}} \right\rfloor -1$.
Thus, under a  memory size $B$, we are guaranteed to find a strategy which
has reward that at most 
$\frac{\lambda}{1-\gamma}\gamma^{  \left\lfloor\frac{\ln B}{\ln \abs{V}} \right\rfloor -1  }$
away from the optimal.

Next we sketch an algorithm for computing optimal $B$-memory bounded strategies.
As mentioned previously, we can restrict our attention to memoryless strategies in
$G\times B$. Memoryless strategies in this graph lead to lasso shaped infinite paths,
with an initial non-cyclic path followed by a simple cycle which is repeated infinitely
often. This means that we can restrict our attention to paths of length
$\abs{V}\cdot B$ which complete the lasso.
Now, we apply this length bound to restrict our attention to the finite portion of the augmented
graph $\widetilde G$ from Section~\ref{section:FiniteHorizon} that corresponds
to path lengths at most $\abs{V}\!\cdot\! B$. 
This finite subgraph contains nodes $V\times \nat[1, \abs{V}\!\cdot\!B]^{\abs{V}}$.
The dynamic programming algorithm is polynomial time on this graph, hence
we get a running time which is polynomial in $\abs{V}\cdot \left(\abs{V}\cdot B\right)^{\abs{V}}$.

\begin{theorem}
	Given a graph $G = (V,E)$, a node $v_0 \in V$, rational values $\lambda$ and $0< \gamma < 1$, and a memory bound $B > 1$, we can compute a $B$-memory optimal strategy
	$\sigma$ with reward $\rav$  
	at most
	$\frac{\lambda}{1-\gamma}\gamma^{  \left\lfloor\frac{\ln B}{\ln \abs{V}} \right\rfloor -1  }$
	away from the optimal $\Rav(v_0)$ in time polynomial in $\abs{V}^{(\abs{V}+1)}\cdot B^{\abs{V}}$.
\end{theorem}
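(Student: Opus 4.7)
The plan is to combine the reduction to memoryless strategies in the product graph $G \times B$ with the augmented-graph dynamic programming machinery from Section~\ref{section:FiniteHorizon}, and then to calibrate the approximation quality against Theorem~\ref{thm:approx-reward}. First, I would invoke the correspondence established just before the theorem statement: computing an optimal $B$-memory strategy in $G$ reduces to computing an optimal memoryless strategy in the finite graph $G \times B$, which has $|V| \cdot B$ nodes. The key structural observation is that any memoryless strategy on a finite graph produces a lasso-shaped infinite path, namely a stem followed by a simple cycle, each of length at most $|V| \cdot B$. Since $\rav$ is insensitive to finite prefixes and determined by the mean weight of the cycle, it suffices to search for the best lasso of total length at most $2\,|V|\,B$.

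Second, I would lift this lasso search into the augmented graph $\widetilde G$, exploiting the fact that along any path segment of length at most $|V| \cdot B$ the quantity $\Last_\pi(t,v)$ is trivially bounded by $|V| \cdot B$. Hence it suffices to restrict $\widetilde G$ to the finite subgraph with node set $V \times \mathbb Z[1, |V| \cdot B]^{|V|}$, of cardinality $|V|^{|V|+1} \cdot B^{|V|}$. On this subgraph a dynamic programming routine in the style of Lemma~\ref{lem:approx-cost}, together with the enumeration over candidate (stem-length, cycle-length) pairs both bounded by $|V| \cdot B$, yields an optimal lasso in time polynomial in the size of the subgraph, which matches the claimed complexity bound. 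Projecting this lasso back to $G \times B$ and then to $G$ recovers the desired optimal $B$-memory strategy.

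For the approximation guarantee, I would compare against the strategy produced by Theorem~\ref{thm:approx-reward}. That strategy uses a memory structure of size $(K+1)^{|V|}$ and has reward within $\frac{\lambda}{1-\gamma}\gamma^{K}$ of $\Rav(v_0)$. Choosing $K$ to be the largest integer such that $(K+1)^{|V|} \le B$, namely $K = \lfloor \frac{\ln B}{\ln |V|} \rfloor - 1$, turns that approximately optimal strategy into a $B$-memory strategy for $G$. Since our computed strategy is optimal among $B$-memory strategies, its reward can only be larger, yielding the stated error bound of $\frac{\lambda}{1-\gamma}\gamma^{\lfloor \ln B / \ln |V| \rfloor - 1}$.

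The main obstacle is pinning down the equivalence between memoryless strategies in $G \times B$ and lassos in the truncated augmented graph carefully enough to justify the dynamic programming. The memory component in $G \times B$ is an abstract finite set, while the augmented graph tracks concrete $\Last$-values that are updated deterministically; one has to argue that every memoryless strategy in $G \times B$ induces a lasso in the truncated augmented graph with the same $\cav$-value, and conversely that any optimal lasso can be realized by a memoryless $G \times B$ strategy. Once this correspondence is established, the complexity and approximation claims both follow from the results already proved.
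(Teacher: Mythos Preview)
Your proposal is correct and follows essentially the same route as the paper: reduce $B$-memory strategies to memoryless strategies in $G\times B$, observe that these yield lassos of length bounded by $|V|\cdot B$, restrict the augmented graph $\widetilde G$ to the finite portion $V\times\mathbb Z[1,|V|\cdot B]^{|V|}$ and run dynamic programming there, and obtain the error bound by embedding the $(K+1)^{|V|}$-memory strategy from Theorem~\ref{thm:approx-reward} into $G\times B$ for the largest admissible $K$. The obstacle you flag---tightening the correspondence between memoryless strategies in $G\times B$ and lassos in the truncated augmented graph---is real but is treated at the same level of sketchiness in the paper itself.
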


 \section{Generalizations of the Model}
 

\smallskip\noindent\textbf{General Decay Profiles.}
In this section we assume the  expected value of the generated reward for nodes
to be time independent.
Given a node $v\in V$,  the associated \emph{decay profile $\Gamma_v$} for the node
is defined to be  be a strictly monotonically decreasing
sequence $1, a_1, a_2, \dots$ (starting from $1$) converging to $0$.
We denote the $i$-th element of the sequence as $\Gamma_v(i)$, and the 
sequence portion $\Gamma_v(i), \Gamma_v(i+1), \dots \Gamma_v(j)$ as
$\Gamma_v[i..j]$.
If the original generated reward at a node $v$ was $\lambda$, after
$\Delta$ time units only $\lambda \cdot \Gamma_v(\Delta)$  of that reward remains after
decay if uncollected.
Denote the sum of the elements of $\Gamma_v[i..j]$ as $\mysum\left(\Gamma_v[i..j]\right)$,
that is 
$\mysum\left(\Gamma_v[i..j]\right) \ = \ \sum_{k=i}^j \Gamma_v(k)$.

\noindent For a path $\pi = v_0,v_1,\ldots,$ 
(and under fixed decay profiles $\Gamma_v$ for the nodes in $V$)  we define the finite horizon
cumulative (expected) reward as
\[
\rsum^{(N)}(\pi) \ =\  
 \sum_{t=0}^N \lambda(v_t) \cdot \mysum\left(\Gamma_{v_t}[0\, ..\  t\!-\! \Last_\pi(t,v_t)\! -\!1 ]\right).
\]
Note that the quantities 
$\lambda(v_t)\cdot \Gamma_{v_t}(0), \, \lambda(v_t)\cdot \Gamma_{v_t}(1), \ldots,
 \lambda(v_t)\cdot \Gamma_{v_t}(  t- \Last_\pi(t,v_t) -1 )$ are the expectations of the
uncollected rewards at node $v_t$ for times $t, t-1, \ldots,  t- \Last_\pi(t,v_t) -1$ respectively.

 We define the infinite horizon limit-average (expected) reward
as for the multiplicative discounting in \eqref{eq:rav-orig}.
Value decision problems are defined
analogously to Definition~\ref{def:ValueDecision}.

%
The following theorem shows that the $\rsum$-value decision problem remains NP-hard
under any decay profile (the proof is given in the appendix); and that the $\rsum$-values
can be computed in EXPTIME (a version of the augmented weighted graph can be defined similar to the multiplicative
discounting case to compute the finite horizon reward value $\Rsum^{(N)}(v)$).

\begin{theorem}
\label{theorem:GeneralFiniteNP}
Let  $G = (V,E)$ be a finite directed graph, and let the expected rewards and reward decay
profiles be node independent (and non-zero).
\begin{compactenum}
\item 
The finite horizon $\rsum$-value decision problem is NP-complete in case
$N$ is given in unary, and NP-hard otherwise.
\item The value $\Rsum^{(N)}(v)$ for  $v\!\in\! V$ (given $N$  in binary)
 is computable in EXPTIME.
\end{compactenum}
\end{theorem}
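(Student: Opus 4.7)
The plan is to mirror the two-part proof of Theorem~\ref{thm:complexity-finite}, modifying only the reward computation to accommodate a general decay profile $\Gamma$ in place of geometric decay $\gamma^k$. The key observation that makes this adaptation straightforward is that the only structural properties used throughout the arguments of Theorem~\ref{thm:complexity-finite} are (i)~the per-visit collected reward depends on the path only through $\Last_\pi(t, \pi[t])$, and (ii)~this per-visit reward is \emph{strictly monotone} in $\Last_\pi(t, \pi[t])$ with \emph{strictly diminishing returns}. Under the general setting, the per-visit reward at node $v$ when $\Last_\pi(t,v) = k$ is $\lambda(v)\cdot\mysum(\Gamma_v[0..k-1])$, which is strictly increasing in $k$ because $\Gamma_v(k) > 0$ for all $k$, and whose marginal increment $\Gamma_v(k)$ strictly decreases with $k$ by the strict monotonicity of the decay profile. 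So both qualitative properties persist.

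For NP-hardness (part 1), I would reuse the reduction from the Hamiltonian path problem used to establish NP-hardness in Theorem~\ref{thm:complexity-finite} (appendix), with the threshold $\beta$ recomputed by substituting the cumulative sum $\mysum(\Gamma[0..k-1])$ for $(1-\gamma^k)/(1-\gamma)$ in the target formula; the hardness conclusion carries over verbatim because the positivity and strict monotonicity properties above ensure that a Hamiltonian witness still strictly outperforms every non-Hamiltonian alternative. For $N$ given in binary, NP-hardness is immediate since unary encoding is a special case. For NP membership when $N$ is in unary, one guesses a path $\pi$ of length $N$ (polynomially many symbols) and verifies $\rsum^{(N)}(\pi) \geq \beta$ in polynomial time, relying on the standing assumption that $\mysum(\Gamma_v[0..k])$ is computable in time polynomial in $k$ and the input size.

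For part 2 (EXPTIME computation of $\Rsum^{(N)}(v)$), I would construct an augmented weighted graph exactly as in Section~\ref{section:FiniteHorizon}, with nodes $\widetilde V = V \times \mathbb Z_{+}^{|V|}$ tracking the last-visit delay to every node, but replacing the weight $\gamma^{b_v}$ at augmented node $(v, b_1, \dots, b_{|V|})$ by $\lambda(v)\cdot\mysum(\Gamma_v[0..b_v-1])$. The bijection between paths of $G$ starting at $v_0$ and paths of $\widetilde G$ starting at $(v_0, 1, \dots, 1)$ is unaffected by the change of weights, and the sum of the first $N+1$ node weights along the augmented path is exactly $\rsum^{(N)}(\pi)$. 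Since only the first $N+1$ nodes matter, it suffices to work on the finite subgraph with $\widetilde V_N = V \times \mathbb Z[1, N+1]^{|V|}$, which has size exponential in $|V|$ and in the binary description of $N$; a standard longest-path dynamic program over this DAG (indexed by step count) computes $\Rsum^{(N)}(v_0)$ within the same exponential bound. The only nontrivial step is ensuring the reduction threshold in the hardness argument is still separating under $\Gamma$, but this is guaranteed by the strict monotonicity and positivity of the decay profile, so I expect no essential obstacle beyond this threshold recalibration.
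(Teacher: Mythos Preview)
Your proposal is correct and follows essentially the same approach as the paper: the NP-hardness reduction from Hamiltonian path with threshold $\beta = \lambda\cdot\sum_{i=0}^{N}\mysum(\Gamma[0..i])$ at $N=|V|-1$, the guess-and-check argument for NP-membership, and the augmented-graph dynamic program with node weight $\lambda(v)\cdot\mysum(\Gamma_v[0..b_v-1])$ for the EXPTIME upper bound are exactly what the paper does (or sketches). One small overstatement: the ``diminishing returns'' property of $\Gamma$ is not actually needed for the hardness direction---only positivity of the $\Gamma$ values is used to ensure $\mysum(\Gamma[0..i]) > \mysum(\Gamma[0..\Last_\pi(i,v_i)-1])$ whenever $\Last_\pi(i,v_i) < i+1$---but this does not affect the validity of your argument.
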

\begin{proof}
Consider $N = \abs{V}-1$.
Let $\beta  = \lambda\cdot \sum_{i=0}^N  \mysum\left(\Gamma[0\, ..\, i]\right)$.
We reduce deciding existence of the  Hamiltonian path to the  $\rsum$-value decision problem.
Clearly if there exists a Hamiltonian path in $G$, then the answer to the
$\rsum$-value decision problem  is affirmative.
Suppose the answer to the $\rsum$-value decision problem  is affirmative.
We show in this case that $G$ must have a Hamiltonian path.
Consider any non-Hamiltonian path $\pi= v_0, \dots, v_N$ of length $N$.
We show $\rsum(\pi) < \beta$.
We have 
\[
\frac{\beta-\rsum^{(N)}(\pi)}{\lambda} =
\sum_{i=0}^N a_i,
\]
where
\[
a_i  = \begin{cases}
& 0 \quad \text{ if  } v_i \text{ does not appear in } v_0, v_1, \dots, v_{i-1}\\
& \mysum\left(\Gamma[0 \,..\, i]\right) \ - \ 
 \mysum\left(\Gamma[0\, ..\ \Last_\pi(i,v_i) -1 ]\right)
\quad \text{ otherwise.}
\end{cases}
\]
Observe that \[\mysum\left(\Gamma[0 \,..\, i]\right) \ - \ 
 \mysum\left(\Gamma[0\, ..\  \Last_\pi(i,v_i) -1 ]\right) > 0\] in case
$\Last_\pi(i,v_i) < i+1$, \ie, if $v_i$ has appeared in the path
prefix $v_0, v_1, \dots, v_{i-1}$.
Since we must have at least one node which appears twice in $\pi$, we have that
at least one $a_i$ in $\sum_{i=0}^N a_i$ must be strictly positive.
This concludes the proof.
\end{proof}
In the infinite horizon case, a truncated weighted graph can be defined along the lines of
$\widetilde G_K$  to compute $\epsilon$-approximate bounds on $\Rsum^{(N)}(v)$
(\emph{even if } $\sum_{i=0}^{\infty} \Gamma_v(i) = \infty$).
In this case, given $\epsilon\! >\! 0$, the constant $K$ depends on the
rate of decay of the sequences $\Gamma_v$.

\smallskip\noindent\textbf{Two-Player Turn-Based Game Setting.}
In this setting, the nodes of the graph are partitioned into player-1 and
player-2 nodes, with the outgoing edges from a node 
being controlled by the respective player.
The objective of player~1 (the collecting agent) is to maximise its reward ($\rsum^N(\pi)$ or $\rav(\pi)$)
when facing an adversarial player~2 (as earlier, in this setting the collecting agent does
not receive information about the actual generated reward instances).
Plays and strategies are defined in the standard way (see \eg,~\cite{ZwickP96}).
It can be checked that the augmented weighted graphs of Sections~\ref{section:FiniteHorizon} and~\ref{section:InfiniteHorizon} also work in this game-based setting, and thus we obtain
algorithms for computing  $\Rsum^{(N)}(v)$ and $\epsilon$-optimal $\Rav(v)$ values applying algorithms
for shortest path and mean-payoff games~\cite{Darmann2017,ZwickP96}.
 
 \section{Conclusion and Open Problems}
 We have introduced the robot routing problem, which is a reward collection problem on graphs in which the 
 reward structure combines spatial aspects with dynamic and stochastic reward generation. We have studied the computational complexity of the finite and infinite-horizon versions of the problem, as well as the strategy complexity of the infinite-horizon case. We have shown that optimal strategies need memory in general, and that strategies based on last visitation records do not suffice. However, the important question about whether finite-memory suffices for the infinite-horizon optimal path problem or infinite memory is needed remains open. In case finite-memory suffices, it will be desirable to provide an algorithm for precisely solving the infinite-horizon value problem. So far we have only given methods for approximating the optimal solution.
 
 Another interesting line of future work is the generalization of the model to incorporate timing constraints. More specifically, in this work we have assumed that all the rewards accumulated at a node are instantaneously collected. This assumption is justified by the fact that in many request-serving scenarios the time taken to serve the requests at a given location is negligible compared to the time necessary to travel between the locations. A more precise model, however, would have to incorporate the serving time per node, which would depend on the amount of collected reward. This would imply that the elapsed time becomes a random variable, while in our current model it is deterministic. 
 
 Other  extensions include the setting where the robot can react to the environment by observing the collected rewards, or observing the accumulated rewards at nodes of the graph, or where there is  probabilistic uncertainty in the transitions of the robot in the graph.
 Finally, the robot routing problem presents new challenges for development of efficient approximation algorithms, which is a major research direction concerning path optimization problems.\looseness=-1


 \bibliography{main}

\begin{thebibliography}{10}

\bibitem{AmadorOZ14}
Sofia Amador, Steven Okamoto, and Roie Zivan.
\newblock Dynamic multi-agent task allocation with spatial and temporal
  constraints.
\newblock In {\em Proceedings of the Twenty-Eighth {AAAI} Conference on
  Artificial Intelligence}, pages 1384--1390. {AAAI} Press, 2014.
\newblock URL: \url{http://dl.acm.org/citation.cfm?id=2615731.2616029}.

\bibitem{BansalBCM04}
Nikhil Bansal, Avrim Blum, Shuchi Chawla, and Adam Meyerson.
\newblock Approximation algorithms for deadline-{TSP} and vehicle routing with
  time-windows.
\newblock In {\em Proceedings of the 36th Annual {ACM} Symposium on Theory of
  Computing 2004}, pages 166--174. {ACM}, 2004.
\newblock \href {http://dx.doi.org/10.1145/1007352.1007385}
  {\path{doi:10.1145/1007352.1007385}}.

\bibitem{BertsimasR91}
Dimitris~J. Bertsimas and Garrett~J. van Ryzin.
\newblock A stochastic and dynamic vehicle routing problem in the {E}uclidean
  plane.
\newblock {\em Operations Research}, 39(4):601--615, 1991.
\newblock \href {http://dx.doi.org/10.1287/opre.39.4.601}
  {\path{doi:10.1287/opre.39.4.601}}.

\bibitem{BlumCKLMM07}
Avrim Blum, Shuchi Chawla, David~R. Karger, Terran Lane, Adam Meyerson, and
  Maria Minkoff.
\newblock Approximation algorithms for orienteering and discounted-reward
  {TSP}.
\newblock {\em {SIAM} J. Comput.}, 37(2):653--670, 2007.
\newblock \href {http://dx.doi.org/10.1137/050645464}
  {\path{doi:10.1137/050645464}}.

\bibitem{Bouyer2017}
Patricia Bouyer, Piotr Hofman, Nicolas Markey, Mickael Randour, and Martin
  Zimmermann.
\newblock Bounding average-energy games.
\newblock In {\em Foundations of Software Science and Computation Structures:
  20th International Conference, FOSSACS 2017}, pages 179--195. Springer Berlin
  Heidelberg, 2017.
\newblock \href {http://dx.doi.org/10.1007/978-3-662-54458-7_11}
  {\path{doi:10.1007/978-3-662-54458-7_11}}.

\bibitem{BrazdilHKRA15}
Tom{\'{a}}s Br{\'{a}}zdil, Petr Hlinen{\'{y}}, Anton{\'{\i}}n Kucera, Vojtech
  Reh{\'{a}}k, and Mat{\'{u}}s Abaffy.
\newblock Strategy synthesis in adversarial patrolling games.
\newblock {\em CoRR}, abs/1507.03407, 2015.

\bibitem{BulloFPSS11}
Francesco Bullo, Emilio Frazzoli, Marco Pavone, Ketan Savla, and Stephen~L.
  Smith.
\newblock Dynamic vehicle routing for robotic systems.
\newblock {\em Proceedings of the {IEEE}}, 99(9):1482--1504, 2011.
\newblock \href {http://dx.doi.org/10.1109/JPROC.2011.2158181}
  {\path{doi:10.1109/JPROC.2011.2158181}}.

\bibitem{ChatterjeeP15}
Krishnendu Chatterjee and Vinayak~S. Prabhu.
\newblock Quantitative temporal simulation and refinement distances for timed
  systems.
\newblock {\em {IEEE} Trans. Automat. Contr.}, 60(9):2291--2306, 2015.
\newblock \href {http://dx.doi.org/10.1109/TAC.2015.2404612}
  {\path{doi:10.1109/TAC.2015.2404612}}.

\bibitem{Darmann2017}
Andreas Darmann, Ulrich Pferschy, and Joachim Schauer.
\newblock On the shortest path game.
\newblock {\em Discrete Applied Mathematics}, 217(P1):3--18, 2017.
\newblock \href {http://dx.doi.org/10.1016/j.dam.2015.08.003}
  {\path{doi:10.1016/j.dam.2015.08.003}}.

\bibitem{EkiciKK09}
Ali Ekici, Pinar Keskinocak, and Sven Koenig.
\newblock Multi-robot routing with linear decreasing rewards over time.
\newblock In {\em 2009 {IEEE} International Conference on Robotics and
  Automation, {ICRA} 2009}, pages 958--963. {IEEE}, 2009.
\newblock \href {http://dx.doi.org/10.1109/ROBOT.2009.5152803}
  {\path{doi:10.1109/ROBOT.2009.5152803}}.

\bibitem{EkiciR13}
Ali Ekici and Anand Retharekar.
\newblock Multiple agents maximum collection problem with time dependent
  rewards.
\newblock {\em Computers {\&} Industrial Engineering}, 64(4):1009--1018, 2013.
\newblock \href {http://dx.doi.org/10.1016/j.cie.2013.01.010}
  {\path{doi:10.1016/j.cie.2013.01.010}}.

\bibitem{SA13}
Sadegh {Esmaeil Zadeh Soudjani} and Alessandro Abate.
\newblock Adaptive and sequential gridding procedures for the abstraction and
  verification of stochastic processes.
\newblock {\em SIAM Journal on Applied Dynamical Systems}, 12(2):921--956,
  2013.
\newblock \href {http://dx.doi.org/10.1137/120871456}
  {\path{doi:10.1137/120871456}}.

\bibitem{HSCC2017}
Sadegh {Esmaeil Zadeh Soudjani} and Rupak Majumdar.
\newblock Controller synthesis for reward collecting {M}arkov processes in
  continuous space.
\newblock In {\em Proceedings of the 20th International Conference on Hybrid
  Systems: Computation and Control}, HSCC '17, pages 45--54, New York, NY, USA,
  2017. ACM.
\newblock \href {http://dx.doi.org/10.1145/3049797.3049827}
  {\path{doi:10.1145/3049797.3049827}}.

\bibitem{GH82}
Yuri Gurevich and Leo Harrington.
\newblock Trees, automata, and games.
\newblock In {\em Proc. Symposium on Theory of Computing}, pages 60--65. ACM
  Press, 1982.
\newblock \href {http://dx.doi.org/10.1145/800070.802177}
  {\path{doi:10.1145/800070.802177}}.

\bibitem{HoshinoU16}
Satoshi Hoshino and Shingo Ugajin.
\newblock Adaptive patrolling by mobile robot for changing visitor trends.
\newblock In {\em 2016 IEEE/RSJ International Conference on Intelligent Robots
  and Systems (IROS)}, pages 104--110, 2016.
\newblock \href {http://dx.doi.org/10.1109/IROS.2016.7759041}
  {\path{doi:10.1109/IROS.2016.7759041}}.

\bibitem{Karp78}
Richard~M. Karp.
\newblock A characterization of the minimum cycle mean in a digraph.
\newblock {\em Discrete Mathematics}, 23(3):309--311, September 1978.
\newblock \href {http://dx.doi.org/10.1016/0012-365X(78)90011-0}
  {\path{doi:10.1016/0012-365X(78)90011-0}}.

\bibitem{Kolen87}
Anthonius W.~J. Kolen, Alexander H.~G. Rinnooy~Kan, and Henricus W. J.~M.
  Trienekens.
\newblock Vehicle routing with time windows.
\newblock {\em Oper. Res.}, 35(2):266--273, April 1987.
\newblock \href {http://dx.doi.org/10.1287/opre.35.2.266}
  {\path{doi:10.1287/opre.35.2.266}}.

\bibitem{MelvinKKTO07}
Justin Melvin, Pinar Keskinocak, Sven Koenig, Craig~A. Tovey, and Banu~Yuksel
  Ozkaya.
\newblock Multi-robot routing with rewards and disjoint time windows.
\newblock In {\em 2007 {IEEE/RSJ} International Conference on Intelligent
  Robots and Systems}, pages 2332--2337. {IEEE}, 2007.
\newblock \href {http://dx.doi.org/10.1109/IROS.2007.4399625}
  {\path{doi:10.1109/IROS.2007.4399625}}.

\bibitem{Puterman94}
Martin~L. Puterman.
\newblock {\em Markov Decision Processes: Discrete Stochastic Dynamic
  Programming}.
\newblock John Wiley \& Sons, Inc., New York, NY, USA, 1st edition, 1994.
\newblock \href {http://dx.doi.org/10.1002/9780470316887}
  {\path{doi:10.1002/9780470316887}}.

\bibitem{Stranders13}
Ruben Stranders, Enrique~Munoz de~Cote, Alex Rogers, and Nicholas~R. Jennings.
\newblock Near-optimal continuous patrolling with teams of mobile information
  gathering agents.
\newblock {\em Artificial Intelligence}, 195:63 -- 105, 2013.
\newblock \href {http://dx.doi.org/10.1016/j.artint.2012.10.006}
  {\path{doi:10.1016/j.artint.2012.10.006}}.

\bibitem{Tarjan71}
Robert Tarjan.
\newblock Depth-first search and linear graph algorithms.
\newblock {\em SIAM Journal on Computing}, 1(2):146--160, 1972.
\newblock \href {http://dx.doi.org/10.1137/0201010}
  {\path{doi:10.1137/0201010}}.

\bibitem{Thomas95}
Wolfgang Thomas.
\newblock On the synthesis of strategies in infinite games.
\newblock In {\em STACS 95: Theoretical Aspects of Computer Science}, volume
  900 of {\em Lecture Notes in Computer Science}, pages 1--13. Springer-Verlag,
  1995.
\newblock \href {http://dx.doi.org/10.1007/3-540-59042-0_57}
  {\path{doi:10.1007/3-540-59042-0_57}}.

\bibitem{VansteenwegenSO11}
Pieter Vansteenwegen, Wouter Souffriau, and Dirk~Van Oudheusden.
\newblock The orienteering problem: {A} survey.
\newblock {\em European Journal of Operational Research}, 209(1):1--10, 2011.
\newblock \href {http://dx.doi.org/10.1016/j.ejor.2010.03.045}
  {\path{doi:10.1016/j.ejor.2010.03.045}}.

\bibitem{WeiHJ16a}
Changyun Wei, Koen~V. Hindriks, and Catholijn~M. Jonker.
\newblock Dynamic task allocation for multi-robot search and retrieval tasks.
\newblock {\em Appl. Intell.}, 45(2):383--401, 2016.
\newblock \href {http://dx.doi.org/10.1007/s10489-016-0771-5}
  {\path{doi:10.1007/s10489-016-0771-5}}.

\bibitem{Wongpiromsarn13}
Tichakorn Wongpiromsarn, Alphan Ulusoy, Calin Belta, Emilio Frazzoli, and
  Daniela Rus.
\newblock Incremental synthesis of control policies for heterogeneous
  multi-agent systems with linear temporal logic specifications.
\newblock In {\em 2013 IEEE International Conference on Robotics and
  Automation}, pages 5011--5018, 2013.
\newblock \href {http://dx.doi.org/10.1109/ICRA.2013.6631293}
  {\path{doi:10.1109/ICRA.2013.6631293}}.

\bibitem{ZwickP96}
Uri Zwick and Mike Paterson.
\newblock The complexity of mean payoff games on graphs.
\newblock {\em Theor. Comput. Sci.}, 158(1{\&}2):343--359, 1996.
\newblock \href {http://dx.doi.org/10.1016/0304-3975(95)00188-3}
  {\path{doi:10.1016/0304-3975(95)00188-3}}.

\end{thebibliography}
 
%
 \appendix
 \section*{Appendix}
 \section{Optimal Path for Example~\ref{ex:4node-optimal}}

Since the graph $G_{\mathsf e}$ is strongly connected, the optimal path is independent of the initial state. We can also neglect a finite prefix of any path with respect to the computation of the optimal limit-average collected rewards. Any path in the graph $G_{\mathsf e}$ can be (after neglecting a finite prefix) $(abc)^w$ or $(ad)^w$ or can be presented as $\pi = (abc)^{n_0}(ad)^{m_0}(abc)^{n_1}(ad)^{m_1}\ldots$ for unique sequences $\{n_k\}_k$ and $\{m_k\}_k$ where $n_k,m_k\in\mathbb Z_+$. In the following we focus on the limit-average cost function $\cav(\pi)$ and provide a path-independent lower bound which is also achievable by some paths in the graph.
To study the $\cav(\pi)$, we need to compute $\limsup_{N\rightarrow\infty}$ of the fraction $\csum^{(N)}(\pi)/(N+1)$.
Since we intend to provide a lower bound for this quantity, we focus on the limiting behavior of the fraction only for the subsequence $\{\sum_{k=0}^{n_{\mathfrak a}}(3n_k+2m_k)-1\ |\ n_{\mathfrak a} = 1,2,\ldots\}$.
The finite-horizon cost associated to $\pi$
for the horizon $N = \sum_{k=0}^{n_{\mathfrak a}}(3n_k+2m_k)-1$ is the following
\begin{align*}
\csum^{(N)}(\pi) = g_0(\gamma,n_0,m_0)& +\sum_{k=1}^{n_{\mathfrak a}}\left[\gamma^2+2\gamma^{2m_{k-1}+3}+(3n_k-3)\gamma^3\right]\\
 &+\sum_{k=1}^{n_{\mathfrak a}}\left[\gamma^3+\gamma^{3n_k+2}+(2m_k-2)\gamma^2\right],
\end{align*}
where $g_0(\gamma,n_0,m_0)$ is the polynomial that includes the terms generated by the prefix $(abc)^{n_0}(ad)^{m_0}$.
Then
\begin{align*}
\frac{\csum^{(N)}(\pi)}{N+1} &= \frac{\sum_{k=1}^{n_{\mathfrak a}} f(\gamma,n_k,m_k)}{\sum_{k=0}^{n_{\mathfrak a}}(3n_k+2m_k)}
+\frac{g_0(\gamma,n_0,m_0)+2\gamma^{2m_0+3}-2\gamma^{2m_n+3}}{\sum_{k=0}^{n_{\mathfrak a}}(3n_k+2m_k)}
\end{align*}
where $f(\gamma,n_k,m_k) := (2m_k-1)\gamma^2+(3n_k-2)\gamma^3+2\gamma^{2m_k+3}+\gamma^{3n_k+2}$.
The second fraction in the right-hand side goes to zero for $n_{\mathfrak a}\rightarrow\infty$ thus the limit of the fraction is the same as the following
\begin{align}
\frac{\sum_{k=1}^{n_{\mathfrak a}} f(\gamma,n_k,m_k)}{\sum_{k=1}^{n_{\mathfrak a}}(3n_k+2m_k)}
 = \frac{\sum_{i,j=1}a_{ij}^{(n_{\mathfrak a})}f(\gamma,i,j)}{\sum_{i,j=1}a_{ij}^{(n_{\mathfrak a})}(3i+2j)}
 = \sum_{i,j=1}\frac{a_{ij}^{(n_{\mathfrak a})}(3i+2j)}{\sum_{i_1,j_1=1}a_{i_1j_1}^{(n_{\mathfrak a})}(3i_1+2j_1)}\frac{f(\gamma,i,j)}{3i+2j},\label{eq:convex_comb}
\end{align}
where $a_{ij}^{(n_{\mathfrak a})}$ is the number of pairs $(i,j)$ that appear in the finite sequence $\{(n_k,m_k)\ |\ k=1,2,\ldots,n_{\mathfrak a}\}$.
This equality indicates that \eqref{eq:convex_comb} is a convex combination of fractions $f(\gamma,i,j)/(3i+2j)$ thus $\cav(\pi)$ is lower bounded by
\begin{equation}
\label{eq:lower_bound_cost}
\cav(\pi)\ge \inf\left\{\frac{f(\gamma,i,j)}{3i+2j}\,\bigg|\,i,j\in\mathbb Z_+\right\}.
\end{equation}
Note that $f(\gamma,i,j)/(3i+2j)$ is exactly the cost of path $\pi$ with the constant sequences $n_k=i$ and $m_k=j$ for all $k$.
Therefore, inequality \eqref{eq:lower_bound_cost} states that for the graph of Example~\ref{ex:4node-optimal}, the cost of non-periodic paths can not be less that the minimum cost of periodic paths. Then we need to compare a countable number of polynomials and find their minimum over the interval $\gamma\in(0,1)$. Such a comparison reveals that there are two support polynomials with $(i,j)$ equal to $(1,1),(2,1)$.
There is also one limit polynomial in which $j$ is fixed and $i$ goes to infinity. This polynomial $\gamma^3$ is not present in \eqref{eq:lower_bound_cost} explicitly but is generated by the path $(abc)^w$ which we left a side from the beginning. This completes the proof of computation in Example~\ref{ex:4node-optimal}.

\section{Missing Proofs from Section~\ref{section:FiniteHorizon}}
\medskip

The NP-hardness result (Theorem~\ref{thm:complexity-finite})
for $\gamma <1$ is based on the following lemma.
\begin{lemma}
For any finite graph $G = (V,E)$ with $|V| = n_{\mathsf v}$, $v\in V$ and $\gamma\in(0,1)$, we have that
\begin{equation}
\label{eq:Hamilt_finite}
\Rsum^{n_{\mathsf v}-1}(v) \le
\frac{\lambda\left(n_{\mathsf v}-(n_{\mathsf v}+1)\gamma+\gamma^{n_{\mathsf v}+1}\right)}{(1-\gamma)^2}.
\end{equation}
The equality holds if and only if the graph has a Hamiltonian path starting from $v$. 
\end{lemma}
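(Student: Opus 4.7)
The plan is to apply the closed-form for $\rsum^{(N)}(\pi)$ from Equation~\eqref{eq:rsumNew} with $N=n_{\mathsf v}-1$, which gives
\[
\rsum^{(n_{\mathsf v}-1)}(\pi) \;=\; \frac{n_{\mathsf v}\lambda}{1-\gamma} \;-\; \frac{\lambda}{1-\gamma}\sum_{t=0}^{n_{\mathsf v}-1} \gamma^{\Last_\pi(t,\pi[t])},
\]
so that, since $\gamma\in(0,1)$ makes the leading factor $\lambda/(1-\gamma)$ positive, maximizing the reward is equivalent to \emph{minimizing} the sum $S(\pi) := \sum_{t=0}^{n_{\mathsf v}-1}\gamma^{\Last_\pi(t,\pi[t])}$ over paths $\pi$ of length $n_{\mathsf v}-1$ starting at $v$.

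The central observation is a per-term bound: by definition $\Last_\pi(t,\pi[t]) \leq t+1$, with equality iff $\pi[t] \notin \{\pi[0], \dots, \pi[t-1]\}$. Since $\gamma\in(0,1)$, the map $k\mapsto \gamma^k$ is strictly decreasing, so $\gamma^{\Last_\pi(t,\pi[t])} \geq \gamma^{t+1}$ for every $t$, with equality at step $t$ iff $\pi[t]$ is fresh at step $t$. Summing over $t=0,\dots,n_{\mathsf v}-1$ therefore gives
\[
S(\pi) \;\geq\; \sum_{t=0}^{n_{\mathsf v}-1}\gamma^{t+1} \;=\; \frac{\gamma - \gamma^{n_{\mathsf v}+1}}{1-\gamma},
\]
with equality iff every $\pi[t]$ is fresh, i.e.\ $\pi[0],\ldots,\pi[n_{\mathsf v}-1]$ are pairwise distinct, which (as $|V|=n_{\mathsf v}$) is precisely the statement that $\pi$ is a Hamiltonian path of $G$ starting at $v$.

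Substituting this lower bound for $S(\pi)$ back into the expression for $\rsum^{(n_{\mathsf v}-1)}(\pi)$ and simplifying the algebra:
\[
\rsum^{(n_{\mathsf v}-1)}(\pi) \;\leq\; \frac{n_{\mathsf v}\lambda}{1-\gamma} - \frac{\lambda(\gamma - \gamma^{n_{\mathsf v}+1})}{(1-\gamma)^2} \;=\; \frac{\lambda\bigl(n_{\mathsf v}(1-\gamma) - \gamma + \gamma^{n_{\mathsf v}+1}\bigr)}{(1-\gamma)^2} \;=\; \frac{\lambda\bigl(n_{\mathsf v} - (n_{\mathsf v}+1)\gamma + \gamma^{n_{\mathsf v}+1}\bigr)}{(1-\gamma)^2},
\]
which is the claimed bound \eqref{eq:Hamilt_finite}. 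Taking the supremum over $\pi$ yields the bound on $\Rsum^{n_{\mathsf v}-1}(v)$, and the equality-case characterisation above gives exactly the ``iff Hamiltonian path from $v$'' statement.

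There is essentially no hard step here: the pointwise comparison $\gamma^{\Last_\pi(t,\pi[t])}\geq\gamma^{t+1}$ does all the work, and the only care needed is in the equality case, where the strict monotonicity of $\gamma^k$ (which uses $\gamma<1$) gives strict inequality in $S(\pi)$ as soon as some node repeats, hence strict inequality in \eqref{eq:Hamilt_finite} whenever no Hamiltonian path from $v$ is realised; the remaining algebra is a finite geometric series collapsed into the stated rational expression.
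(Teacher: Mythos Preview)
Your proof is correct and follows essentially the same approach as the paper: both use the termwise bound $\Last_\pi(t,\pi[t])\le t+1$ together with the strict monotonicity of $k\mapsto\gamma^k$ to lower-bound $\csum^{(n_{\mathsf v}-1)}(\pi)$ by the geometric sum $\sum_{t=0}^{n_{\mathsf v}-1}\gamma^{t+1}$, and then identify the equality case with the path having no repeated nodes, i.e.\ being Hamiltonian from $v$. Your presentation is slightly more streamlined in handling the ``iff'' in one stroke, but there is no substantive difference.
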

\begin{proof}
The definition of $\Last_\pi(t,v)$ implies that $\Last_\pi(t,v)\le t+1$. Then for any path $\pi$ with $|\pi| = n_{\mathsf v}-1$ and $\pi[0]=v$,
\begin{align*}
\csum^{n_{\mathsf v}-1}(\pi) & = \sum_{t=0}^{n_{\mathsf v}-1}\gamma^{\Last_\pi(t,\pi[t])}\ge \sum_{t=0}^{n_{\mathsf v}-1} \gamma^{t+1} = \frac{\gamma(1-\gamma^{n_{\mathsf v}})}{1-\gamma}\\
\Rightarrow& \rsum^{n_{\mathsf v}-1}(\pi) \le  \frac{n_{\mathsf v}\lambda}{1-\gamma}
-  \frac{\lambda}{1-\gamma} \frac{\gamma(1-\gamma^{n_{\mathsf v}})}{1-\gamma},
\end{align*}
which proves \eqref{eq:Hamilt_finite}. If the path $\pi$ is Hamiltonian, then the path does not contain any repeated node and $\Last_\pi(t,\pi[t]) = t+1$ thus $\pi$ is optimal and achieves the upper bound in  \eqref{eq:Hamilt_finite}.
For the other direction, we show that a path $\pi$ that satisfies the condition of the claim is a Hamiltonian path.
Suppose that this is not the case. Since the length of $\pi$ is $n_{\mathsf v}-1$, $\pi$ contains at least one repeated node.
Thus for at least one index $t$ we have  $\Last_\pi(t,\pi[t])< t+1$, and then $\sum_{t=0}^{n_{\mathsf v}-1}\gamma^{\Last_\pi(t,\pi[t])}> \sum_{t=0}^{n_{\mathsf v}-1} \gamma^{t+1}$. Therefore, $\rsum^{n_{\mathsf v}-1}(\pi)$ is strictly less than the right-hand side of \eqref{eq:Hamilt_finite}, which contradicts the choice of the path $\pi$.
\end{proof}

\begin{proof}[\textbf{Theorem~\ref{thm:complexity-finite}}]
%
%
The case for $\gamma\in (0,1)$ can be proved as follows.
We reduce the Hamiltonian path problem to the finite horizon value decision problem as follows. 
Given a graph $G = (V,E)$ with $|V| = n_{\mathsf v}$, we fix some constants $\lambda$ and $\gamma$. By the inequality we showed in \eqref{eq:Hamilt_finite}, $G$ contains a Hamiltonian path if and only if for some $v \in V$ we have $\Rsum^{n_{\mathsf v}-1}(v) \geq \lambda\left[n_{\mathsf v}-(n_{\mathsf v}+1)\gamma+\gamma^{n_{\mathsf v}+1}\right]/(1-\gamma)^2$. This completes the reduction.

The proof for $\gamma =1$ is similar to the proof of 
Theorem~\ref{theorem:GeneralFiniteNP}.
\end{proof}

\section{Missing Proofs from Section~\ref{section:InfiniteHorizon}}
\noindent
\textbf{Lemma~\ref{lemma:Hamilt}}
For any graph $G = (V,E)$ and any node $v_0\in V$, $\Rav(v_0)$ is bounded as
\begin{equation}
\label{eq:boundsAppendix}
\lambda\frac{1-\gamma^p}{1-\gamma}\le \,\Rav(v_0)\,\le \lambda\frac{1-\gamma^{n_{\mathsf v}}}{1-\gamma},
\end{equation}
where $n_{\mathsf v} = |V|$ and $p$ is the length of the longest simple cycle in $G$.
\begin{proof}
Let $\pi$ be an infinite path and denote with $\pi_N = \pi[0\ldots N]$ the prefix of $\pi$ of length $N$. 
We apply the inequality of arithmetic and geometric means for $n\in\mathbb N$
\begin{equation*}
\frac{a_1\!+\!a_2\!+\!\ldots\!+\!a_n}{n}\ge \sqrt[n]{a_1a_2\ldots a_n}, \quad a_1,a_2,..,a_n\ge 0,
\end{equation*}
where the equality holds if and only if all $a_i$'s are equal.
Then we have
\begin{align*}
\csum^{(N)}(\pi) &\ge (N\!+\!1)\gamma^{\frac{1}{N+1}\sum_{t=0}^{N}\Last_\pi(t,\pi[t])}
= (N\!+\!1)\gamma^{\sum_v y_v/(N\!+\!1)},
\end{align*}
where $y_{v} = 1+\max\{j\in\mathbb N \mid j\le N,\,\, \pi[j] = v\}$ if $v$ is visited by $\pi_N$, otherwise $y_v = 0$.
This implies that $y_{v}\le N+1$ and
$
\csum^{(N)}(\pi)\ge (N+1)\gamma^{n_{\mathsf v}(N+1)/(N+1)} = (N+1)\gamma^{n_{\mathsf v}}.
$
Dividing both sides by $(N+1)$ and taking $\limsup_{N\rightarrow\infty}$ gives a general path-independent lower bound $\gamma^{n_{\mathsf v}}$ for $\cav(\pi)$,
which also implies $\Cav(v_0) =
$ $ \inf_{\pi}\left\{\cav(\pi)\ |\ \pi[0] = v_0\right\} \ge \gamma^{n_{\mathsf v}}$.
This proves the upper bound in \eqref{eq:boundsAppendix}.
The particular path $\pi_p = (v_1v_2\ldots v_p)^\omega$, with $v_1v_2\ldots v_pv_1$ being the longest simple cycle, gives the bound $\Rav(v_0) \ge \rav(\pi_p)$ which is the lower bound in \eqref{eq:boundsAppendix}.  
\end{proof}

\begin{proposition}
Strategies based only on last visitation records not suffice for the infinite-horizon optimal path problem.
\end{proposition}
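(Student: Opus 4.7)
The plan is to reuse the graph $G_{\mathsf e}$ of Figure~\ref{fig:graph1} together with a value of $\gamma$ strictly inside the interval $(a_1, a_2)$ from Example~\ref{ex:4node-optimal}. On this interval, the optimum limit-average reward from $a$ is attained by $(abcabcad)^\omega$, and direct inspection of the polynomial expressions in Example~\ref{ex:4node-optimal} (together with the convex-combination argument for arbitrary paths recalled in the appendix) shows that the three simple ultimately periodic alternatives $(abc)^\omega$, $(ad)^\omega$, and $(abcad)^\omega$ all attain strictly smaller reward. I will show that no strategy depending only on the order of last visits can generate $(abcabcad)^\omega$, and then that every such strategy is forced into one of these three strictly suboptimal alternatives.

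The decisive observation is a record collision: after the finite prefixes $abca$ and $abcabca$ the latest visitation record at node $a$ is the same, namely $a \succ c \succ b$ with $d$ either unvisited or ranked below $b$ (these two possibilities collapse under the record abstraction). A strategy that depends only on this record must play the same move in both situations, but $(abcabcad)^\omega$ requires playing $b$ after $abca$ and $d$ after $abcabca$. Hence $(abcabcad)^\omega$ is inaccessible to any last-visitation-record strategy.

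To finish the argument I would note that in $G_{\mathsf e}$ the out-degree is one at $b$, $c$, and $d$, so a strategy is determined by its choices at $a$, and paths from $a$ have the restricted shape $a(bca \mid da)^*$. A direct enumeration shows that only four record classes are reachable at $a$; computing the transitions between them under the two possible choices $b, d$ yields a small finite-state system whose ultimately periodic trajectories in $G_{\mathsf e}$ can only be $(abc)^\omega$, $(ad)^\omega$, or $(abcad)^\omega$. Since each is strictly suboptimal for $\gamma \in (a_1, a_2)$, no record-based strategy is optimal. The main obstacle is purely bookkeeping: identifying the four record classes at $a$ and verifying that no more subtle long-period behaviour escapes this finite record state space; once this enumeration is in hand, the suboptimality of each candidate path follows from the reward formulas already computed in Example~\ref{ex:4node-optimal}.
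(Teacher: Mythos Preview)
Your proposal is correct and follows essentially the same approach as the paper: both use $G_{\mathsf e}$ with $\gamma\in(a_1,a_2)$, identify the same record collision (the paper writes it as the node $(a,\langle c,b,a\rangle)$ occurring twice in the augmented path with distinct successors), and then enumerate the reachable record states at $a$ to conclude that any record-based strategy yields an ultimately periodic path equivalent to $(abc)^\omega$, $(ad)^\omega$, or $(abcad)^\omega$. Two cosmetic points: the paper counts five reachable record states at $a$ (including the initial empty record), not four, and your parenthetical about the two cases ``collapsing'' is unnecessary since $d$ is unvisited after both $abca$ and $abcabca$---but neither affects the argument.
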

\begin{proof}
Recall that for the graph $G_{\mathsf e} = (V_{\mathsf e},E_{\mathsf e})$ in Figure \ref{fig:graph1} and 
$\gamma=0.26$ the path $(abcabcad)^\omega$ is optimal. Upon visiting node $a$ the corresponding strategy chooses one of the nodes $b$ or $d$ depending on the \emph{number} of visits to $b$ since the last occurrence of $d$. We will now show that every strategy based \emph{only on the order of last visits} results in an ultimately periodic path the loop of which is one of the cycles $(abc)^\omega$, or $(ad)^\omega$, or $(abcad)^\omega$, and is hence not optimal.

To this end, consider the graph $\widehat G_{\mathsf e} = (\widehat V_{\mathsf e},\widehat E_{\mathsf e})$ with set of states $\widehat V_{\mathsf e} = V_{\mathsf e} \times O$, where the set $O$  consists of the vectors  of elements of $V$, of length less than or equal to $4$, that represent the order of last occurrences of the nodes $V$ in the traversed path. For example, $\langle a,c,b\rangle$ denotes the fact that the last occurrence of $a$ was after the last occurrence of $c$, which was after the last occurrence of $b$, and node $d$ has not occurred yet in the path. The edge relation $\widehat E_{\mathsf e} \subseteq \widehat V_{\mathsf e} \times \widehat V_{\mathsf e}$, where $(v,o) \rightarrow (v',o')$ implies $(v,v') \in E_{\mathsf e}$, updates the last occurrence order in the expected way, depending on the taken edge $(v,v')$. 

For example, the unique path in $\widehat G_{\mathsf e}$ that corresponds to $(abcabcad)^\omega$ is
\[\widehat\pi = (a,\langle\rangle)(b,o_1)(c,o_2)(a,o_3)(b,o_4)(c,o_5)(a,o_3)(d,o_4) \cdot \widehat{ \rho}^\omega,\]
where
\[\widehat{ \rho} = (a,o_6)(b,o_7)(c,o_8)(a,o_9)(b,o_{10})(c,o_{11})(a,o_{9})(d,o_{10})\]
and
\[
\begin{array}{l}
o_1  =  \langle a \rangle, 
o_2  =  \langle b,a\rangle, 
o_3  =  \langle c, b, a\rangle, 
o_4  =  \langle a,c,b \rangle,
o_5  =  \langle b,a,c \rangle, 
o_6  =  \langle  d,a,c,b\rangle, 
o_7 =  \langle  a,d,c,b\rangle,\\
o_8 =  \langle  b,a,d,c\rangle, 
o_9  =  \langle  c,b,a,d\rangle, 
o_{10}  =  \langle a,c,b,d \rangle,
o_{11}  =  \langle b,a,c,d \rangle.
\end{array}
\] 

Since there are occurrences of node $(a,o_3)$ in the path followed by different nodes, $(b,o_4)$ and $(d,o_4)$ respectively, and similarly for node $(a,o_9)$, followed by either $(b,o_{10})$ or $(d,o_{10})$, the path $\widehat\pi$ cannot be generated by a memoryless strategy in $\widehat G$.

In general, in $\widehat G$ there are $5$ possible reachable states of the form $(a,o)$, namely,
$(a,\langle\rangle)$, $(a,\langle c,b,a\rangle)$, $(a,\langle d, a\rangle)$, $(a,\langle d,a,c,b\rangle)$ and $(a,\langle c,b,a,d\rangle)$. 
A memoryless strategy in $\widehat G$ (that is, a strategy that only records the order of last visits) must map each of these nodes to a unique successor node. Thus, such a strategy results in one of the paths
$(abc)^\omega$, $abc(ad)^\omega$, $(abcad)^\omega$, $(ad)^\omega$, $ad(abc)^\omega$, or $(adabc)^\omega$.

For $\gamma=0.26$ each of these paths has reward strictly smaller than $(abcabcad)^\omega$ and is hence not optimal.This concludes the proof.
\end{proof}

\end{document}